\documentclass[aps,twocolumn,prl,reqno,superscriptaddress,reprint,floatfix]{revtex4-2}
\usepackage{etex}
\usepackage{graphicx}
\usepackage{hyperref}
\usepackage{physics}
\usepackage{float}
\usepackage{amsmath}

\usepackage{amsthm}
\usepackage{dsfont}
\usepackage{enumitem}
\usepackage{algcompatible}
\usepackage{algorithm}
\usepackage{marginnote}
\usepackage{dsfont}
\usepackage{mathtools}

\setlist{noitemsep}

\usepackage [
  n,
  advantage,
  operators,
  sets,
  adversary,
  landau,
  probability, 
  notions,
  logic,
  ff,
  mm,
  primitives,
  events,
  complexity,
  asymptotics,
  keys
  ] {cryptocode}

\usepackage{tikz, graphics}
\usetikzlibrary{arrows.meta, bending, patterns}
\usepackage{tikzscale} 

\floatname{algorithm}{Protocol}

\newcommand{\CZ}{\mathsf{CZ}}
\newcommand{\Correct}{\mathsf{Correct}}
\newcommand{\Redo}{\mathsf{Redo}}
\newcommand{\Abort}{\mathsf{Abort}}
\newcommand{\ok}{\mathsf{Ok}}

\newcommand{\Id}{\mathbb{I}}

\newcommand{\pmax}{p_{\mathit{max}}}
\newcommand{\pmin}{p_{\mathit{min}}}

\newcommand{\fu}{\mathtt u}

\newcommand{\fv}{\mathtt v}
\newcommand{\fV}{\mathtt V}
\newcommand{\fw}{\mathtt w}
\newcommand{\fW}{\mathtt W}
\newcommand{\ft}{\mathtt t}
\newcommand{\fT}{\mathtt T}
\newcommand{\fo}{\mathtt o}
\newcommand{\fO}{\mathtt O}
\newcommand{\fS}{\mathtt S}
\newcommand{\fail}{\mathrm{fail}}
\newcommand{\1}{\mathds{1}}

\newtheorem{theorem}{Theorem}
\newtheorem{lemma}{Lemma}
\newtheorem{corollary}{Corollary}
\newtheorem{definition}{Definition}
 
\begin{document}

\date{\today}
\title{Verifying BQP Computations on Noisy Devices with Minimal Overhead}
\author{Dominik Leichtle}
\affiliation{Laboratoire d’Informatique de Paris 6, CNRS, Sorbonne Université, 4 Place Jussieu, 75005 Paris, France}
\author{Luka Music}
\affiliation{Laboratoire d’Informatique de Paris 6, CNRS, Sorbonne Université, 4 Place Jussieu, 75005 Paris, France}
\author{Elham Kashefi}
\affiliation{School of Informatics, University of Edinburgh, 10 Crichton Street, Edinburgh EH8 9AB, United Kingdom}
\affiliation{Laboratoire d’Informatique de Paris 6, CNRS, Sorbonne Université, 4 Place Jussieu, 75005 Paris, France}
\author{Harold Ollivier}
\affiliation{INRIA, 2 rue Simone Iff, 75012 Paris, France}
\affiliation{Laboratoire d’Informatique de Paris 6, CNRS, Sorbonne Université, 4 Place Jussieu, 75005 Paris, France}

\begin{abstract}
  With the development of delegated quantum computation, clients will want to ensure confidentiality of their data and algorithms, and the integrity of their computations.
  While protocols for blind and verifiable quantum computation exist, they suffer from high overheads and from over-sensitivity: When running on noisy devices, imperfections trigger the same detection mechanisms as malicious attacks, resulting in perpetually aborted computations.
  We introduce the first blind and verifiable protocol for delegating BQP computations to a powerful server with repetition as the only overhead.
  It is composable and statistically secure with exponentially-low bounds and can tolerate a constant amount of global noise.
\end{abstract}

\maketitle

\section{Introduction}
\label{sec:intro}
Remotely accessible quantum computing platforms free clients from the burden of maintaining complex physical devices in house.
Yet, when delegating computations, they want their data and algorithms to remain private, and that these computations are executed as specified.
Several methods have been devised to achieve this (e.g.~\cite{bfk, fk}, see~\cite{GKK19} for a review).
Nonetheless, a practical solution remains to be found as all known protocols are too sensitive to noise.
Indeed, they have been designed for perfect devices, thus aborting as soon as the smallest deviation is detected.
Unfortunately, replacing such machines by even slightly noisy ones would make the verification procedure abort constantly, mistaking plain imperfections for the signature of malicious behaviour.

For dealing with this over-sensitivity, previous research either gave up on blindness~\cite{GHK18}, imposed restrictions on the noise model~\cite{KD19}, switched to a setting with two non-communicating servers and classical clients~\cite{MF13a}, or introduced computational assumptions~\cite{Urm18}.
Yet, these protocols either only achieve inverse-polynomial security or obtain exponential security by requiring an additional fault-tolerant encoding of the computation on top of the one used to suppress device noise.

We tackle this problem for BQP computations -- i.e.\ the class of decision problems that quantum computers can solve efficiently -- by introducing a protocol that provides noise-robustness, verification, blindness and delegation.
The protocol repeats the client's computation framed in the Measurement-Based Quantum Computation (MBQC) model -- a natural choice for delegating computations -- several times in a blind fashion while interleaving these executions with test rounds which aim at detecting a dishonest behaviour of the server.
A final majority vote over the computation rounds mitigates possible errors, thus providing the desired robustness. 

Combined with blindness, this forces the server to attack at least a constant fraction of the rounds to corrupt the computation, hence increasing its chances of getting caught by the tests.
Information theoretic security is proven in the composable framework of Abstract Cryptography~\cite{MR11}, ensuring security is not jeopardised by sequential or simultaneous instantiations with other protocols.

Crucially, our protocol has \emph{no space overhead} for each round when compared to the insecure computation in the MBQC model: the only price to pay for exponential security and correctness is a \emph{polynomial number of repetitions} of computations similar to the unprotected one.
This lets the client use the full extent of the available hardware for its computational tasks, and any increase in the capabilities of the quantum devices can be used entirely to scale-up these computations.
These properties make it, to our knowledge, the first experimentally realisable solution for verification of BQP computations, thus going beyond experimental feasibility demonstrations of verifiable building blocks~\cite{BKBF12,BFKW13,GRBM16,MPBM16} and potentially serving as a blueprint for the development of future quantum network applications.

\section{Preliminaries}
\label{sec:mbqc}
\paragraph{BQP Computations.}
The complexity class BQP contains the decisions problems that can be solved with bounded error probability using a polynomial size quantum cricuit.
More formally, a language $L$ is in BQP if there is a family of polynomial size quantum circuits which decides the language with an error probability of at most $p$.
The chosen value for $p$ is arbitrary as long as it is fixed, and is usually taken to be $1/3$.
Hence, a BQP computation for $L$ will have output $F(x) = 1$ for $x\in L$ with probability at least $1-p$, while it will have output $F(x) = 0$ for $x\notin L$ with probability at least $1-p$.
In the following, for a given BQP computation, $p$ will be referred to as the \emph{inherent error probability} to distinguish it from errors due to external causes such as the use of noisy devices.

\paragraph{Measurement Based Quantum Computation.}
An MBQC algorithm  (also called \emph{measurement pattern}) consists of a graph $G = (V,E)$, two vertex sets $I$ and $O$ defining input and output vertices, a list of angles $\{\phi_v \}_{v \in V}$ with $\phi_v \in \Theta := \{k\pi/4\}_{0 \leq k \leq 7}$ and a flow. To run it, the Client instructs the Server to prepare the graph state $\ket{G}$: for each vertex in $V$, the Server creates a qubit in the state $\ket{+}$ and performs a $\CZ$ gate for each pair of qubits in $E$. The Client then asks the Server to measure each qubit of $V$ along the basis $\qty{\dyad{+_{\phi'_v}}, \dyad{-_{\phi'_v}}}$ in the order defined by the flow of the computation, with $\ket{+_\alpha} = (\ket 0 + e^{i\alpha}\ket 1)/\sqrt 2$. The corrected angle $\phi'_v$ is given by $\phi'_v = (-1)^{s_v^X}\phi_v + s_v^Z\pi$ for binary values of $s_v^X$ and $s_v^Z$ that depend only on the outcomes of previously measured qubits and the flow. More details about the flow and the update rules for the measurement angles can be found in~\cite{hein2004multiparty,DK2006}.

As shown in \cite{mbqc}, the MBQC model is equivalent to the circuit model, so that any BQP algorithm in the circuit model can be translated in the MBQC model with at most polynomial overhead.

\paragraph{Hiding the Computation.}
A computation can be easily hidden if, instead of the Server preparing each qubit, the Client (i) for all $v \in V$ sends $\ket{+_{\theta_v}}$ with $\theta_v$ chosen uniformly at random in $\Theta$, (ii) asks the Server to measure the qubits in the basis defined by the angle $\delta_v = \phi'_v + \theta_v + r_v\pi$ for $r_v$ a random bit, while keeping $\theta_v$ and $r_v$ hidden from the Server, and (iii) uses $s_v = b_v \oplus r_v$ where $b_v$ is the measurement outcome to compute $s_v^X$ and $s_v^Z$ defined above. The angle $\theta_v$ here acts as a One-Time Pad for $\phi'_v$ while $r_v$ does the same for the measurement outcomes. This idea was first formalised in the \emph{Universal Blind Quantum Computation} (UBQC) Protocol in~\cite{bfk}.

\paragraph{Verifiability Through Trap Insertion.}
Verifiable protocols allow the Client to check that its computation has been done correctly.
To do this, the Client enlarges the graph used for the computation to insert traps.
These traps are made from qubits randomly prepared in $\ket{+_{\theta}}$ states and disconnected from the sub-graph used for performing the desired computation with the help of \emph{dummy qubits} -- i.e.~randomly initialised qubits sent by the Client in states $\{\ket{0}, \ket{1}\}$.
The first verification protocol via trapification was introduced in~\cite{fk}.
It was further optimised into the \emph{Verifiable Blind Quantum Computation Protocol} (or VBQC) of~\cite{KW15,XTH20}, achieving a linear overhead.

\section{Noise-Robust Verifiable Protocol}
\label{sec:prot}
\begin{figure*}[t]

\begin{minipage}{\textwidth}
\includegraphics[width=0.8\textwidth]{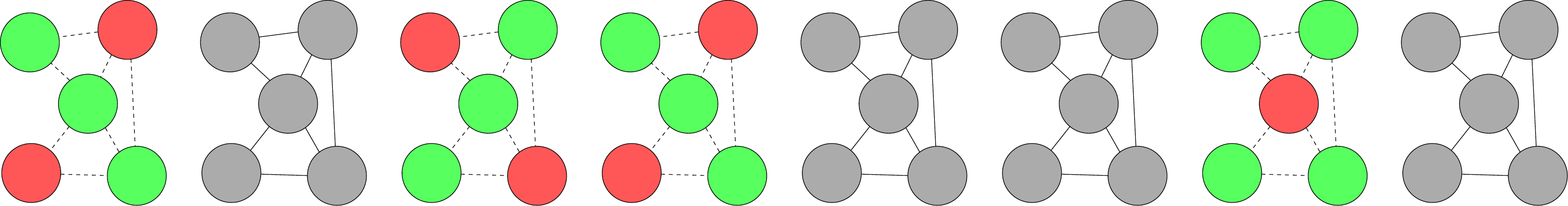}
	\caption{An example of rounds of the proposed protocol. Graphs in grey denote computation rounds while graphs containing red nodes (traps) and green nodes (dummies) are test rounds. Each qubit is always included in one type of test round. The Server remains completely oblivious of the differences between the rounds, which are solely known to the Client.}
  	\label{fig:protocol_highlevel}
\end{minipage}
  	
\begin{minipage}{\textwidth}

\begin{algorithm}[H]
\caption{\raggedright Noise-Robust VBDQC for BQP Computations}
\label{prot:MQ-VBQC}
\begin{algorithmic}[0]
\STATE \textbf{Client's Inputs:} Angles $\qty{\phi_v}_{v \in V}$ and flow $f$ on graph $G$, classical input to the computation $x \in \bin^{\#I}$ (where $\#X$ is the size of $X$).
\STATE \textbf{Protocol:}
\begin{enumerate}
\item The Client chooses uniformly at random a partition $(C, T)$ of $[n]$ ($C \cap T = \emptyset$) with $\#C = d$, the sets of indices of the computation and test rounds respectively.

\item[2.] For $j \in [n]$, the Client and the Server perform the following sub-protocol (the Client may send message $\Redo_j$ to the Server before step 2.c while the Server may send it to the Client at any time, both parties then restart round $j$ with fresh randomness):
\begin{enumerate}

\item[(a)] If $j \in T$ (test), the Client chooses uniformly at random a colour $\mathsf{V}_j \in_R \qty{V_k}_{k \in [K]}$ (this is the set of traps for this test round).
\item[(b)] The Client sends $\#V$ qubits to the Server. If $j \in T$ and the destination qubit $v \notin \mathsf{V}_j$ is a non-trap qubit (therefore a dummy), then the Client chooses uniformly at random $d_v \in_R \bin$ and sends the state $\ket{d_v}$. Otherwise, the Client chooses at random $\theta_v \in_R \Theta$ and sends the state $\ket{+_{\theta_v}}$.
\item[(c)] The Server performs a $\CZ$ gate between all its qubits corresponding to an edge in the set $E$.
\item[(d)] For $v \in V$, the Client sends a measurement angle $\delta_v$, the Server measures the appropriate corresponding qubit in the $\delta_v$-basis, returning outcome $b_v$ to the Client. The angle $\delta_v$ is defined as follows:
\begin{itemize}
\item If $j \in C$ (computation), it is the same as in UBQC, computed using the flow and the computation angles $\qty{\phi_v}_{v \in V}$. For $v \in I$ (input qubit) the Client uses $\tilde{\theta}_v = \theta_v + x_v\pi$ in the computation of $\delta_v$.
\item If $j \in T$ (test): if $v \notin \mathsf{V}_j$ (dummy qubit), the Client chooses it uniformly at random from $\Theta$; if $v \in \mathsf{V}_j$ (trap qubit), it chooses uniformly at random $r_v \in_R \bin$ and sets $\delta_v = \theta_v + r_v\pi$.
\end{itemize}
\end{enumerate}

\item[3.] For all $j \in T$ (test round) and $v \in \mathsf{V}_j$ (traps), the Client verifies that $b_v = r_v \oplus d_v$, where $d_v = \bigoplus_{i \in N_{G}(v)} d_i$ is the sum over the values of neighbouring dummies of qubit $v$. Let $c_{\mathit{fail}}$ be the number of failed test rounds (where at least one trap qubit does not satisfy the relation above), if $c_{\mathit{fail}} \geq w$ then the Client aborts by sending message $\Abort$ to the Server.

\item[4.] Otherwise, let $y_j$ for $j \in C$ be the classical output of computation round $j$ (after corrections from measurement results). The Client checks whether there exists some output value $y$ such that $\# \left\{ y_j \, | \, j \in C,\, y_j = y \right\} > \frac{d}{2}$. If such a value $y$ exists (this is then the majority output), it sets it as its output and sends message $\ok$ to the Server. Otherwise it sends message $\Abort$ to the Server.
\end{enumerate}
\end{algorithmic}
\end{algorithm}  	

\end{minipage}

\end{figure*}

Our Noise-Robust VBQC Protocol is formally defined in Protocol \ref{prot:MQ-VBQC} where test rounds are used in conjunction with computation rounds to provide verifiability.
We introduce it more intuitively in the next paragraphs and discuss the features that make it suitable for practical purposes.

\paragraph{Trap Insertion for BQP Computations.}
Because BQP computations have classical inputs and classical outputs, there exists a more economical trap insertion than what is available for quantum input and quantum output computations.
More concretely, it does not require any enlargement of the graph to insert traps alongside the computation.
Rather, the idea is to interleave pure \emph{computation rounds} (i.e.\ without inserted traps) and \emph{pure test rounds} (i.e.\ only made up of traps). 

Given a UBQC computation defined by a graph $G$, we construct test rounds based on a $k$-colouring $\{V_i\}_{i \in [k]}$ of $G$.
A partition of a graph in $k$ sets --~called colours~-- is a valid $k$-colouring if all adjacent vertices in the graph have different colours.
Therefore, by definition, a $k$-colouring satisfies $\bigcup_{i=1}^{k} V_i = V , \text{ and }\forall i \in [k],  \, \forall v \in V_i : N_G(v) \cap V_i = \emptyset$, where $N_G(v)$ are the neighbours of $v$ in $G$.
Hence, for each colour $i$, the Client can decide to insert traps for all vertices of $V_i$ and dummies in all other positions.
This defines the test round associated to colour $i$.
These tests require the same sequence of operations for the Server as regular UBQC computations, making them undetectable.

\paragraph{Informal Presentation of the Protocol.}
Suppose the Client wishes to delegate a BQP computation corresponding to a measurement pattern on a graph $G$ to the Server. The Client chooses a colouring $\{V_i\}_{i\in [k]}$ of $G$, and two integers $d$ and $t$. All these parameters are fixed for a given instantiation of the protocol and are publicly available to both parties.

The Client runs the UBQC Protocol $n := t+d$ times successively. For $d$ of the rounds chosen at random (computation rounds), the Client updates the measurement angles according to the measurement pattern of its desired computation. The remaining $t$ rounds are test rounds. For each such test round, the Client secretly chooses a colour at random and sends traps for vertices of that colour and dummies everywhere else. The Client instructs the Server to measure all qubits as in computation rounds, but with the measurement angle of trap qubits corresponding to the basis they were prepared in and a random measurement basis for the dummies. Because the trap qubits are isolated from each other, they should remain in their initial state. A test round is said to have \emph{passed} if all the traps yield the expected measurement results, and \emph{failed} otherwise. Figure~\ref{fig:protocol_highlevel} depicts such possible succession of rounds.

At the end of the protocol, the Client counts the number of failed test rounds. If this number is higher than a given threshold $w$, it aborts the protocol by sending the message $\Abort$ to the Server~\footnote{$w$ would typically be set by the Client given its \textit{a priori} understanding of the quality of the Server. As explained in the Discussion, this does not affect security: a higher value would induce more rounds than necessary to achieve a given confidence level, while a lower value would risk aborting with high probability.}. Otherwise it sets the majority outcome of the computation rounds as its output and sends message $\ok$ to the Server.

In this construction all rounds share the same underlying graph $G$, the same order for the measurements of qubits, and all angles are chosen from the same uniform distribution. We prove formally later that this implies blindness -- i.e.\ the Server cannot distinguish computation and test rounds, nor tell which qubits are traps -- which in turn makes this trap insertion strategy efficient to obtain verifiability. The parameters' range and influence on verifiability and noise-robustness bounds are detailed in the next section. 

\paragraph{Redo Feature.}
Because the Client or the Server may experience unintentional devices failures, they might wish to discard and redo a round $j \in [n]$. In this case, our protocol allows each party to send a $\Redo_j$ request to the other, in which case both parties simply repeat the exact same round albeit with fresh randomness. $\Redo_j$ requests are allowed only so long as the party asking for it is still supposed to be manipulating the qubits of round $j$. We show that this does not impact the blindness nor verifiability of the scheme. This means that a dishonest Server cannot use $\Redo$ requests to trick the Client into accepting an incorrect result. Such capability of our protocol is crucial in practice: without it, detected honest failures of devices happening during a test round would be counted as a failed test round, thus decreasing drastically the likelihood of successfully completing the protocol. Since concerned rounds can be safely repeated, the only consequence of experimental failures caught during an execution is an increase in the expected number of rounds.

\paragraph{Exponential Security Amplification.}
The above approach to trap insertion is efficient as the only overhead is the repetition of the same sub-protocol. Yet, using a single computation round and $n-1$ test rounds would leave at least $1/n$ chance for the Server to corrupt the computation. The only previously-known method to obtain an exponentially-low cheating probability was to insert traps into a single computation round at the expense of drastically increasing the graph's complexity and then using fault-tolerant encoding on top to amplify the security. By restricting the computation to BQP computations, we prove that a classical repetition error-correcting code is sufficient to achieve exponentially-low cheating probability.
This amplification technique is common in purely classical scenarios where attacks can be classically correlated across various rounds. Although this claim has been made as well in the quantum case in previous works \cite{fk,KW15,KD19}, it remained up to now unproven. The difficulty, which we address below, is that quantum attacks entangled across rounds are much more powerful than what classical correlations allow.  

\section{Security Results and Noise Robustness}
\label{sec:sec_rob}
This section presents the protocol's security properties in the Abstract Cryptography Framework of~\cite{MR11} (AC) and its noise-robustness on honest devices. The reader is referred to the Supplementary Material for formal definitions and proofs of Theorems~\ref{theo:sec_VDQC} and~\ref{thm:correctness}.

\paragraph{Security Analysis.}
\label{sec:corr_bl}
In AC, security is defined as indistinguishability between an Ideal Resource, which is secure by definition, and its real-world implementation, i.e.~the protocol.
This framework ensures a higher standard of security than in other approaches (see e.g.~\cite{KRBM07a} and Section 5.1 of~\cite{PR14}) and is inherently composable, meaning that security holds when the protocol is repeated sequentially or in parallel with others. 
This property is crucial as delegated protocols are important stepping stones towards more complex functionalities (e.g.\ subroutine for building Multi-Party Quantum Computation protocols \cite{KKMO21}).

Our security proof uses the results of~\cite{DFPR14} that reduce the composable security of a Verifiable Delegated Quantum Computation Protocol to four \emph{stand-alone criteria}:
\begin{itemize}
\item $\epsilon_{\mathit{cor}}$-local-correctness: the protocol with honest players produces the expected output;
\item $\epsilon_{\mathit{bl}}$-local-blindness: the Server's state at the end of the protocol is indistinguishable from the one which it could have generated on its own;
\item $\epsilon_{\mathit{ver}}$-local-verifiability: either the Client accepts a correct computation or aborts the protocol.
\item $\epsilon_{\mathit{ind}}$-independent-verification: the Server can determine on its own, using the transcript of the protocol and its internal registers, whether the Client will decide to abort or not. 
\end{itemize}
Then, the Local-Reduction Theorem (Corollary 6.9 from~\cite{DFPR14}) states that if a protocol implements a unitary transformation on classical inputs and is $\epsilon_{\mathit{cor}}$-locally-correct, $\epsilon_{\mathit{bl}}$-locally-blind and $\epsilon_{\mathit{ver}}$-locally-verifiable with $\epsilon_{\mathit{ind}}$-independent verification, then it is $\epsilon$-composably-secure with:
\begin{equation}\label{eq:reduction}
\epsilon = \mathit{max}\qty{\epsilon_{\mathit{sec}}, \epsilon_{\mathit{cor}}} \text{ and } \epsilon_{\mathit{sec}} := 4\sqrt{2\epsilon_{\mathit{ver}}} + 2\epsilon_{\mathit{bl}} + 2\epsilon_{\mathit{ind}}.
\end{equation}

With this at hand, we can state our main result:
\begin{theorem}[Security of Protocol \ref{prot:MQ-VBQC}]
\label{theo:sec_VDQC}
For $n = d+t$ such that $d/n$ and $t/n$ are fixed in $(0,1)$ and $w$ such that $w/t$ is fixed in $(0, \frac{1}{k}\cdot \frac{2p-1}{2p-2})$, where $p$ is the inherent error probability of the BQP computation, Protocol~\ref{prot:MQ-VBQC} with $d$ computation rounds, $t$ test rounds, and a maximum number of tolerated failed test rounds of $w$ is $\epsilon$-composably-secure with $\epsilon$
exponentially small in $n$.
\end{theorem}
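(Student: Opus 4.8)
The plan is to invoke the Local-Reduction Theorem (Corollary~6.9 of~\cite{DFPR14}) stated above: it suffices to show that each of the four stand-alone parameters $\epsilon_{\mathit{cor}}$, $\epsilon_{\mathit{bl}}$, $\epsilon_{\mathit{ind}}$, $\epsilon_{\mathit{ver}}$ is exponentially small in $n$, since then \eqref{eq:reduction} gives $\epsilon$ exponentially small (the square root only halves the decay rate). Three of these follow from the UBQC/VBQC structure of the construction, and essentially all the difficulty is concentrated in verifiability, where one must control quantum deviations that are entangled across the $n$ rounds.

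For local-correctness, an honest Server on noiseless devices makes every trap satisfy $b_v = r_v \oplus d_v$, so no test round fails and $c_{\mathit{fail}} = 0 < w$; each computation round independently returns the correct BQP outcome with probability at least $1-p$, and since $p < 1/2$ Hoeffding's inequality bounds the probability that the majority over the $d = (d/n)\,n$ computation rounds is wrong by $\exp(-2d(1/2-p)^2)$, exponentially small in $n$ (robustness on imperfect honest devices being the separate content of Theorem~\ref{thm:correctness}). For blindness, the graph, the flow, the measurement order, and the marginal of every message the Server receives coincide in computation and test rounds: each prepared qubit is either a uniformly random $\ket{+_{\theta_v}}$ or a dummy $\ket{d_v}$ with $d_v$ uniform, and each $\delta_v$ is uniform on $\Theta$ because $\theta_v$ (resp.\ $r_v$) acts as a perfect one-time pad. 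Hence the Server's view is a fixed distribution, independent of the partition $(C,T)$, the colours $\mathsf{V}_j$, the trap/dummy assignment, and the input $x$, and the $\Redo$ feature only restarts rounds with fresh independent randomness; this yields perfect blindness $\epsilon_{\mathit{bl}} = 0$ as for UBQC, and the same analysis bounds $\epsilon_{\mathit{ind}}$, since the $\ok/\Abort$ decision is a function of the returned outcomes $b_v$ together with secret data that, by blindness, is uncorrelated with the Server's state beyond what the transcript reveals.

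The crux is local-verifiability, which I would establish in four steps. (i) \emph{Reduction to Pauli deviations}: averaging over the Client's secret angle pads twirls any Server deviation, so that, conditioned on the classical transcript, its net effect on the Client's outputs is the insertion of a Pauli operator just before each measurement, drawn from some distribution over Pauli strings on all $n$ rounds at once; the distribution need not factorize (the attack may be entangled), but blindness ensures it carries no quantum side-information about $(C,T)$ or the $\mathsf{V}_j$. (ii) \emph{Detection in test rounds}: for a fixed Pauli string, call a round \emph{deviated} if the inserted Pauli acts non-trivially on the graph there; since $\mathsf{V}_j$ is uniform over the $k$ colours and independent of the (now classical) Pauli string, a deviated test round places a trap on an affected qubit — and hence fails — with probability at least $1/k$. (iii) \emph{Corruption in computation rounds}: flipping the majority output while about $pd$ computation rounds are already honestly wrong requires corrupting more than $\tfrac{1/2-p}{1-p}\,d = \tfrac{2p-1}{2p-2}\,d$ of them, and every corrupted computation round is in particular deviated. (iv) \emph{Concentration}: as $(C,T)$ is a uniformly random $d$-subset of $[n]$ drawn independently of the set of deviated rounds, a hypergeometric tail bound shows that more than $\tfrac{2p-1}{2p-2}\,d$ deviated computation rounds forces at least $(\tfrac{2p-1}{2p-2}-o(1))\,t$ deviated test rounds except with probability exponentially small in $n$, and a Chernoff bound over their conditionally independent failure events then gives $c_{\mathit{fail}} \ge (\tfrac{1}{k}\cdot\tfrac{2p-1}{2p-2}-o(1))\,t$, so the Client aborts — precisely because $w/t$ is fixed strictly below $\tfrac{1}{k}\cdot\tfrac{2p-1}{2p-2}$. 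A union bound over the (at most exponentially many) Pauli strings and deviated-round patterns, absorbed into the exponential slack, yields $\epsilon_{\mathit{ver}}$ exponentially small.

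The main obstacle is making steps (i), (ii) and (iv) rigorous for entangled attacks: one must push all cross-round correlation of the deviation into a single classical distribution over Pauli strings with no residual quantum information usable to correlate with the random partition or the trap colours, so that conditioning on the realised Pauli string renders the per-round detection events and the labelling $(C,T)$ genuinely independent — which is exactly what licenses the Chernoff and hypergeometric estimates. This is the point left unproven by earlier works, where quantum attacks correlated across rounds are strictly more powerful than classical correlations, and it is where the bulk of the analysis, together with the precise form of the threshold on $w/t$, resides.
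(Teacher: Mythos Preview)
Your overall strategy---invoke the Local-Reduction Theorem and bound $\epsilon_{\mathit{cor}}$, $\epsilon_{\mathit{bl}}$, $\epsilon_{\mathit{ind}}$, $\epsilon_{\mathit{ver}}$ separately, with all the work concentrated in verifiability via Pauli twirling, per-round $1/k$ detection, the $\tfrac{2p-1}{2p-2}$ corruption threshold, and hypergeometric/Hoeffding concentration---is exactly the paper's. Two points need correcting, however.

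First, your argument for $\epsilon_{\mathit{ind}}$ is not the right one. Independent verification asks that the \emph{Server} can compute the accept/abort bit from its own registers and the transcript. In Protocol~\ref{prot:MQ-VBQC} the Client explicitly sends $\ok$ or $\Abort$ as the last message, so the Server simply reads it off and $\epsilon_{\mathit{ind}}=0$ trivially. Your reasoning (``the decision is a function of $b_v$ together with secret data uncorrelated with the Server's state'') is arguing something closer to blindness of the decision and does not establish the required property. Relatedly, the paper takes $\epsilon_{\mathit{cor}}=0$ on noiseless honest devices (perfect correctness of UBQC in each round), rather than the exponentially small Hoeffding bound you give; either suffices for the theorem.

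Second, the ``union bound over the (at most exponentially many) Pauli strings'' in your step~(iv) is both unnecessary and, as stated, unsound. After the twirl, the failure probability is a \emph{convex combination}
\[
\Pr[\fail]=\sum_{k,\sigma}|\alpha_{k\sigma}|^2 f(\sigma)\ \leq\ \max_\sigma f(\sigma),
\]
so one only needs to bound the worst single Pauli deviation---this is what the paper does, further reducing by symmetry to $\sigma$ acting on a single qubit in each of the first $m$ rounds and then optimising over $m$. A union bound over all $4^{n|V|}$ Pauli strings would multiply by an exponential factor that is \emph{not} in general absorbed by the concentration slack (the decay rates in Equation~\eqref{eq:bound} depend on $\varepsilon_1,\varepsilon_2,\varepsilon_3,\varphi$ and can be made arbitrarily small near the threshold). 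Replace the union bound with the convexity argument and your outline matches the paper's proof.
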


\paragraph{Simple Upper-Bound on the Probability of Failure.}
The $\epsilon_{\mathit{ver}}$-local-verifiability amounts to upper bound the probability that an erroneous result is accepted by $\epsilon_{\mathit{ver}}$.
Given a BQP computation that decides whether $x$ belongs or not to the language $L$, our protocol would yield the correct result after the majority vote whenever less than $d/2$ computation rounds yield $F(x) \oplus 1$.
These erroneous results can be due to malicious behaviours of the server, to its use of noisy devices or to inherent errors of the BQP algorithm.
It is expected that, in $pd$ computation rounds, the BQP computation will give an inherently erroneous result, and that this will happen for a fraction greater than $p$ only with negligible probability.
Therefore, the result obtained by running our protocol will be correct whenever it is possible to guarantee that there is a negligible probability that the server corrupts more than $\left(\frac{1}{2} - p - \varphi \right)d$ computation runs for some $\varphi > 0$.
To this end, we use the trapification paradigm.
First, it ensures that each non-trivial deviation to the computation will be detected by at least one of the $k$ possible types of test rounds.
Second, because the deviations are distributed equally among test and computation runs, we can conclude that if less than $(\frac{1}{2} - p - \varphi - \varepsilon_1)t$ test runs are corrupted for some $\varepsilon_1 > 0$, then less than $(\frac{1}{2} - p - \varphi )d$ computations are corrupted with overwhelming probability.
This implies that setting $w = (\frac{1}{k} - \varepsilon_2)(\frac{1}{2} - p - \varphi - \varepsilon_1)t$ for $\varepsilon_2 > 0$ yields an exponentially low probability of failure.
Since $\varphi, \varepsilon_1, \varepsilon_2$ can be chosen arbitrarily small, we conclude that $\epsilon_{\mathit{ver}}$ can be made negligible for $0 < w/t < \frac{1}{k}(\frac{1}{2} -p)$.

\paragraph{Improved Upper-Bound on the Probability of Failure.}
The former bound can be improved by realising that some situations leading to incorrect results were double counted.
Indeed, we need to consider inherent errors from the BQP computation solely for the computation rounds that where unaffected by the Server's malicious behaviour.
This is due to the blindness of the scheme ensuring that the Server's deviation will be distributed equally among computation rounds with or without inherent errors.
Denoting by $m$ the total number of rounds affected by the Server's deviation, we expect $(md + (n-m)pd)/n$ computation rounds to be erroneous.
The first term comes from deviations of the Server, while the second comes from inherent errors in the BQP computation when the Server has not deviated on these rounds.
Requiring this quantity to be below $d/2$ amounts to guarantee that $m < \frac{2p-1}{2p-2} n$, which can be obtained following the line of arguments given in the previous paragraph whenever $w$ satisfies $0 < w/t < \frac{1}{k}\cdot \frac{2p-1}{2p-2}$.

\paragraph{Local-Correctness on Honest-but-Noisy Devices.}
\label{sec:noise_rob}
None of the stand-alone criteria introduced above consider device imperfections.
In fact, the analysis of correctness, blindness and verification makes no distinction between device imperfections and potentially malicious behaviours. 
Although satisfactory -- these properties make our protocol a concrete implementation of the Ideal Resource for Verifiable Delegated Quantum Computation --, it could still fall short of expectations in terms of usability because non malicious device imperfections could cause unintentional aborts.
Fortunately, for a class of realistic imperfections, our protocol is capable of correcting their impact and accepts with high probability.
In such case, the final outcome is the same as that obtained on noiseless devices with honest participants. 

This additional \emph{noise-robustness} property, the main innovation of this paper, means that Protocol~\ref{prot:MQ-VBQC} also satisfies the local-correctness property with negligible $\epsilon_{\mathit{cor}}$ for noisy but honest Client and/or Server.
This property holds under the following restrictions:
\begin{itemize}
\item The noise can be modelled by round-dependent Markovian processes -- i.e.\ a possibly different arbitrary CPTP map acting on each round.
\item The probability that at least one of the trap measurements fails in any single test round is upper-bounded by some constant $\pmax < \frac{1}{k}\cdot \frac{2p-1}{2p-2}$ and lower-bounded by $\pmin \leq \pmax$.
\end{itemize}

Theorem~\ref{thm:correctness} states that, in order for the protocol to terminate correctly with overwhelming probability on these noisy devices, $w$ should be chosen such that $w/t > \pmax$. Conversely, for any choice of $w/t < \pmin$, we show that the protocol aborts with overwhelming probability. 

\begin{theorem}[Local-Correctness of VDQC Protocol on Noisy Devices, Informal]
\label{thm:correctness}
As before, $p$ denotes the inherent error probability for the BQP computation. Assume a Markovian round-dependent model for the noise on Client and Server devices and let $\pmin \leq \pmax < \frac{1}{k}\cdot \frac{2p-1}{2p-2}$ be respectively a lower and an upper-bound on the probability that at least one of the trap measurement outcomes in a single test round is incorrect. If $w/t > \pmax$, Protocol~\ref{prot:MQ-VBQC} is $\epsilon_{\mathit{cor}}$-locally-correct with exponentially low $\epsilon_{\mathit{cor}}$. On the other hand, if $w/t < \pmin$, then the probability that Protocol~\ref{prot:MQ-VBQC} terminates without aborting is exponentially low.
\end{theorem}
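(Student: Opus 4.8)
The plan is to reduce both halves of the statement to a Chernoff/Hoeffding concentration estimate for a sum of conditionally independent indicator variables; the only real content is identifying the right indicators. Throughout I use that $d$ and $t$ are both linear in $n$, and that the Markovian round-dependent noise model together with the fresh randomness the Client draws in every round makes the data of distinct rounds --- in particular the test-round failure indicators --- mutually independent once we condition on the partition $(C,T)$. For the first half ($w/t > \pmax$), the protocol fails to terminate correctly only through one of two events: (i) it aborts in Step~3 because $c_{\mathit{fail}} \ge w$; or (ii) in Step~4 the multiset $\{y_j\}_{j \in C}$ has no strict majority, or its strict majority differs from $F(x)$. I would bound $\Pr[(i)]$ and $\Pr[(ii)]$ each by a quantity exponentially small in $n$ and conclude by a union bound.

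Event (i) is immediate: by hypothesis each test round fails with probability at most $\pmax$ and these $t$ indicators are independent, so $c_{\mathit{fail}}$ is stochastically dominated by a $\mathrm{Binomial}(t,\pmax)$ variable; since $w/t - \pmax$ is a fixed positive constant, Hoeffding's inequality gives $\Pr[c_{\mathit{fail}} \ge w] \le \exp\!\big(-2t(w/t-\pmax)^2\big)$. Event (ii) is where structure is needed, and here I would recycle the blindness and trapification analysis already carried out for Theorem~\ref{theo:sec_VDQC}. Absorbing the Client's one-time pad (the Pauli-twirl step of that proof) replaces the arbitrary CPTP noise $\mathcal{E}_j$ on round $j$ by a distribution over Pauli corrections whose law, by $\epsilon_{\mathit{bl}}$-blindness, does not depend on whether round $j$ is a computation or a test round, up to an additive $\epsilon_{\mathit{bl}}$ that is itself exponentially small. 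Call round $j$ \emph{corrupted} if this deviation alters the output distribution of the round. The $k$-colouring trap-detection bound underlying the verifiability proof then yields, per round, $\Pr[\text{round corrupted}] \le k\cdot\Pr[\text{test round fails}] \le k\pmax$; and since the Client draws $(C,T)$ independently of everything else, the corrupted rounds form, conditionally on their number $m$, a uniformly random subset of $[n]$ independent of $(C,T)$ and of the inherent-error coin flips of the computation rounds.

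With this in hand the rest is bookkeeping. For an arbitrarily small constant $\eta > 0$, Hoeffding's inequality gives, except with probability exponentially small in $n$: $m \le k\pmax\,n + \eta n$; at most $\tfrac{m}{n}d + \eta n$ computation rounds are corrupted; and at most $p\tfrac{n-m}{n}d + \eta n$ of the remaining computation rounds output a value $\ne F(x)$ through an inherent error. A computation round that is neither corrupted nor inherently erroneous outputs $F(x)$ with certainty, so the number of computation rounds whose output is not $F(x)$ is at most $\big(p + \tfrac{m}{n}(1-p)\big)d + 2\eta n$. Since $m/n \le k\pmax < \tfrac1k\cdot\tfrac{2p-1}{2p-2}$ gives $m/n < \tfrac{2p-1}{2p-2}$, and since $\tfrac{2p-1}{2p-2}(1-p) = \tfrac{1-2p}{2}$, we get $p + \tfrac{m}{n}(1-p) < p + \tfrac{1-2p}{2} = \tfrac12$ with a margin bounded away from $0$ (because $\pmax$ is bounded away from $\tfrac1k\cdot\tfrac{2p-1}{2p-2}$ and $p$ is fixed); choosing $\eta$ small enough forces strictly fewer than $d/2$ computation rounds to differ from $F(x)$, so $F(x)$ is the strict majority, the Client outputs it and sends $\ok$. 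A union bound over the finitely many exceptional events completes the first half. The second half ($w/t < \pmin$) is the mirror image: each test round fails independently with probability at least $\pmin$, so $c_{\mathit{fail}}$ stochastically dominates $\mathrm{Binomial}(t,\pmin)$, and the lower Chernoff tail gives $\Pr[c_{\mathit{fail}} < w] \le \exp\!\big(-2t(\pmin-w/t)^2\big)$, so the Client sends $\Abort$ in Step~3 with overwhelming probability.

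The only genuinely non-routine ingredient is the per-round inequality $\Pr[\text{round corrupted}] \le k\cdot\Pr[\text{test round fails}]$ together with the claim that the corruption pattern is statistically independent of the $(C,T)$ partition and of the inherent errors of the computation rounds. Both rest on the two pillars of the verifiability proof of Theorem~\ref{theo:sec_VDQC}: the reduction of an arbitrary CPTP noise channel to a Pauli-mixture deviation via the Client's one-time pad, and the guarantee that any deviation affecting the computation is flagged by at least a $1/k$ fraction of the colour-indexed test rounds. I expect this to be the main obstacle to a fully rigorous write-up --- in particular, propagating the $\epsilon_{\mathit{bl}}$ slack cleanly and making the ``blind independence'' of the corruption pattern precise; everything downstream is the concentration bookkeeping above.
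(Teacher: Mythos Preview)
Your argument is correct. For event (i) and for the second half ($w/t < \pmin$) it coincides with the paper's proof: Lemma~\ref{lem:acceptance} uses the same stochastic domination by $\mathrm{Binomial}(t,\pmax)$ (resp.\ $\mathrm{Binomial}(t,\pmin)$) and the same Hoeffding tail, with the same exponents. Where you diverge is event (ii). The paper does not re-enter the trapification machinery at all: it observes that an honest-but-noisy Server is a particular choice of the deviation maps $\{\mathcal{F}_j\}$, so the verifiability bound of Theorem~\ref{thm:verif} applies verbatim and gives $\Pr[\neg\Correct \land \ok] \le \epsilon_{\mathit{ver}}$; one line of arithmetic then yields $\epsilon_{\mathit{cor}} = \epsilon_{\mathit{rej}} + \epsilon_{\mathit{ver}}$. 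You instead rebuild a direct bound for the Markovian case via $\Pr[\text{round corrupted}] \le k\,\Pr[\text{test fails}] \le k\pmax$ (which is correct --- union-bound over the $k$ colour classes) followed by fresh concentration on the number of corrupted rounds. Your route works and has the small advantage of needing only $\pmax < \tfrac{1}{k}\cdot\tfrac{2p-1}{2p-2}$ rather than the additional $w/t < \tfrac{1}{k}\cdot\tfrac{2p-1}{2p-2}$ that the black-box invocation of Theorem~\ref{thm:verif} requires; but every ingredient you flag as non-routine --- the Pauli-twirl reduction, the $1/k$ detection guarantee, independence of the corruption pattern from $(C,T)$ and from the inherent-error coins --- is already established inside the proof of Theorem~\ref{thm:verif}, so the paper's route is considerably shorter and sidesteps having to make those points precise a second time.
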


Using again the Local-Reduction Theorem from~\cite{DFPR14}, this new bound concerning local-correctness on noisy devices can be combined with noise-independent blindness, input-independent verification and verifiability, to yield a composably secure protocol for $\epsilon = \mathit{max}\qty{\epsilon_{\mathit{sec}}, \epsilon_{\mathit{cor}}}$. Here, $\epsilon$ might depend on the noise level of the devices through $\epsilon_{\mathit{cor}}$. 
\section{Discussion}
\label{sec:concl}
\paragraph{Role of Noise Assumptions in Correctness Analysis.}\label{sec:noise_assump}
Our security proof does not rely on any assumption regarding the form or amplitude of the noise: it considers any deviation as potentially malicious and shows that the protocol provides information-theoretic verification and blindness.
The assumptions on the noise -- limited strength and markovianity -- are used only to show that correctness holds not only in the honest and noiseless case, but also when the imperfections of the devices are mild.
In such cases, their impact on the computation can be mitigated and the protocol will accept with high probability.

\paragraph{Fine-Tuning the Number of Repetitions.}
For specific computations with fixed security and correctness targets as well as noise levels, several parameters can be tuned to optimise the total runtime of our protocol.
First, distributing rounds across different machines is an effective way to reduce the overall execution time while composability ensures that security is preserved. 
Second, for a fixed graph, a smaller value of $k$ allows a larger value of $\pmax$, since exponential verification and correctness require $\pmax < w/t < \frac{1}{k}\cdot \frac{2p-1}{2p-2}$:
finding a small $k$-colouring of the graph used for the computation widens the gap between the chosen threshold ratio $w/t$ and $\frac{1}{k}\cdot \frac{2p-1}{2p-2}$, thereby reducing the number of rounds required to get the desired security and correctness levels.\footnote{This can be done once by the Server for its architecture and later shared with the Client before starting the protocol as a service.}
Third, the ratio $d/t$ also influences the number of repetitions.
Given fixed values for $p$, $k$, $w/t$, security and correctness levels, the optimal ratio can be determined numerically using equations~\ref{eq:bound}~and~\ref{eq:accept}, which explicitly relate the failure and success probabilities to these parameters.

\paragraph{Decoupling Verifiability and Fault-Tolerance.}
Because a single trap has bounded sensitivity -- the probability $\alpha$ of not detecting an attack at a given vertex is bounded away from $0$ -- it must be boosted to get exponential security. Previous work resorted to fault-tolerant encoding of the computation path to ensure that $r$ errors can be corrected (see \cite{fk,KW15}). 
This forces attackers to corrupt at least $r$ locations to affect the computation, which decreases the probability of not detecting such attacks to $\alpha^r$. Increasing the security of these protocols simultaneously increases the minimum distance of the fault-tolerant amplification scheme, thereby reducing the number of available qubits to perform the computation.

Our protocol's repetition of test rounds and majority vote serve the same purpose but with a much lighter impact.
Because our detection probability amplification relies on a classical procedure, all qubits can be devoted to useful computations irrespective of the desired security level.

Additionally, our protocol does not abort at the first failed trap while previous approaches do.
This means that, in the presence of noise, other protocols always require an exponentially low global residual error level to accept with overwhelming probability. 
On the contrary, our protocol only needs the average ratio of failed test rounds to be upper-bounded away from $\frac{1}{k}\cdot \frac{2p-1}{2p-2}$, which requires to bring the global residual error level to a constant only.
This promises to drastically ease experimental feasibility of verified quantum computations.

\paragraph{Acknowledgements.}
We thank Theodoros Kapourniotis and Atul Mantri for fruitful discussions.
We acknowledge support from the EU H2020 Program under grant agreement number 820445 (QIA). 
DL acknowledges support from the EU H2020 Program under grant agreement number ERC-669891 \mbox{(Almacrypt)}, and by the French ANR Projects ANR-18-CE39-0015 \mbox{(CryptiQ)} and ANR-18-CE47-0010 \mbox{(QUDATA)}.

\bibliographystyle{apsrev}
\bibliography{q_trinity}

\appendix

\section{Useful Inequalities from Probability Theory}
\label{app:tail_bounds}
The following definitions and lemmata are useful tools for our proof. We refer the reader to \cite{proba} for more in-depth definitions.

\begin{definition}[Hypergeometric distribution]
        Let $N, K, n \in \mathbb{N}$ with $0 \leq n,K \leq N$.  A random variable $X$ is said to follow the \emph{hypergeometric distribution}, denoted as $X\sim\operatorname{Hypergeometric}(N,K,n)$, if its probability mass function is described by
	\begin{align*}
		\Pr \left[ X = k \right] = \frac{\binom{K}{k}\binom{N-K}{n-k}}{\binom{N}{n}}.
	\end{align*}
	As one possible interpretation, $X$ describes the number of drawn marked items when drawing $n$ items from a set of size $N$ containing $K$ marked items, \emph{without replacement}.
\end{definition}

\begin{lemma}[Tail bound for the hypergeometric distribution]
	Let $X\sim\operatorname{Hypergeometric}(N,K,n)$ be a random variable and $0 < t < K/N$. It then holds that
	\begin{align*}
	  &\Pr \left[ X \leq \left( \frac{K}{N}-t \right) n \right] \leq \exp \left( -2t^2n \right).
	\end{align*}
\end{lemma}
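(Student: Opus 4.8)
The plan is to apply the exponential Chernoff--Markov method to the lower tail, after reducing the moment generating function of the hypergeometric variable to that of a binomial. Observe first that $\mathbb{E}[X] = nK/N$, so the event $\{X \leq (K/N - t)n\}$ is a deviation of $tn$ below the mean. For any $\lambda > 0$, since $s \mapsto e^{-\lambda s}$ is decreasing, Markov's inequality applied to $e^{-\lambda X}$ gives
\[
\Pr\!\left[ X \leq \left(\frac{K}{N} - t\right)n \right] \leq e^{\lambda (K/N - t) n}\, \mathbb{E}\!\left[ e^{-\lambda X} \right].
\]
The whole difficulty is therefore concentrated in controlling the moment generating function $\mathbb{E}[e^{-\lambda X}]$ of a sum drawn without replacement.

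Next I would reduce to sampling \emph{with} replacement. Writing $X = \sum_{i=1}^n Z_i$, where $Z_i$ is the indicator that the $i$-th item drawn without replacement is marked, and $S = \sum_{i=1}^n Y_i \sim \operatorname{Binomial}(n, K/N)$ for i.i.d.\ $Y_i \sim \operatorname{Bernoulli}(K/N)$, I would invoke Hoeffding's convex-domination theorem: for every continuous convex function $f$ one has $\mathbb{E}[f(X)] \leq \mathbb{E}[f(S)]$. Because $s \mapsto e^{-\lambda s}$ is convex, this yields $\mathbb{E}[e^{-\lambda X}] \leq \mathbb{E}[e^{-\lambda S}]$, so the without-replacement tail is dominated by its with-replacement counterpart and nothing is lost in the exponent.

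With the binomial moment generating function in hand, the remainder is the standard Hoeffding estimate. Each factor $Y_i$ takes values in $[0,1]$, so Hoeffding's lemma gives $\mathbb{E}[e^{-\lambda(Y_i - p)}] \leq e^{\lambda^2/8}$ with $p = K/N$; by independence $\mathbb{E}[e^{-\lambda S}] \leq e^{-\lambda p n + n\lambda^2/8}$. Substituting into the Markov bound collapses the $p$-dependent terms and leaves
\[
\Pr\!\left[ X \leq \left(\frac{K}{N} - t\right)n \right] \leq \exp\!\left( -\lambda t n + \frac{n\lambda^2}{8} \right),
\]
which I would optimise by setting $\lambda = 4t$, producing the claimed exponent $-2t^2 n$. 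The hypothesis $0 < t < K/N$ serves only to guarantee $(K/N - t) > 0$ so that the bounded event is meaningful; the inequality itself is valid throughout this range.

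The main obstacle is the convex-domination step, i.e.\ showing that sampling without replacement is no less concentrated than sampling with replacement. I would either cite this as the classical result of Hoeffding, or establish it through a permutation/exchangeability argument: representing the without-replacement sum as an average of conditionally with-replacement sums and applying Jensen's inequality termwise, so that the convexity of $f$ transfers the domination through each step. Everything after this reduction is routine single-variable calculus, and it is precisely this reduction that allows the without-replacement bound to match the i.i.d.\ Hoeffding rate $\exp(-2t^2 n)$ with no loss.
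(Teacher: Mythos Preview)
Your argument is correct and is precisely the classical route due to Hoeffding (1963): Chernoff--Markov, the convex-domination inequality reducing sampling without replacement to i.i.d.\ Bernoulli, Hoeffding's lemma for $[0,1]$-valued summands, and optimisation at $\lambda = 4t$. Each step is sound as written.

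There is, however, nothing to compare against: the paper does not prove this lemma. It is listed in the appendix of auxiliary probabilistic facts with a blanket reference to the probability literature, and is simply quoted as a known tail bound. So your proposal is not an alternative to the paper's proof but a self-contained justification of a result the paper takes as standard. If you want to tighten the write-up, the only point worth an explicit citation is the convex-domination step $\mathbb{E}[f(X)] \leq \mathbb{E}[f(S)]$ for convex $f$; this is Theorem~4 of Hoeffding's 1963 paper and is the one non-elementary ingredient.
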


\begin{corollary} \label{cor:lower_tail}
	Let $X\sim\operatorname{Hypergeometric}(N,K,n)$ be a random variable and $0 < \lambda < \frac{nK}{N}$. It then holds that
	\begin{align*}
		\Pr \left[ X \leq \lambda \right] \leq \exp \left( -2n \left( \frac{K}{N} - \frac{\lambda}{n} \right)^2 \right).
	\end{align*}
\end{corollary}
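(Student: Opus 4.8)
The plan is to deduce the corollary directly from the preceding Lemma (the tail bound for the hypergeometric distribution) by a simple change of variables, with no new probabilistic input required. The Lemma controls $\Pr[X \leq (K/N - t)n]$ for a deviation parameter $t$ ranging in $(0, K/N)$, whereas the corollary states the bound in terms of an absolute threshold $\lambda$ ranging in $(0, nK/N)$. These two parametrizations are related by an affine reparametrization, so the entire argument amounts to matching them up and checking that the admissible ranges coincide.

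Concretely, given $\lambda$ with $0 < \lambda < nK/N$, I would define
\begin{equation*}
t := \frac{K}{N} - \frac{\lambda}{n},
\end{equation*}
so that $(K/N - t)n = \lambda$ identically. The one step that needs verifying is that this choice of $t$ lies in the half-open range $(0, K/N)$ required by the Lemma. The lower bound $t > 0$ is equivalent to $\lambda/n < K/N$, i.e.\ $\lambda < nK/N$, which is exactly the upper hypothesis on $\lambda$. The upper bound $t < K/N$ is equivalent to $\lambda/n > 0$, i.e.\ $\lambda > 0$, which is the lower hypothesis on $\lambda$. Hence the constraints on $\lambda$ and on $t$ are in fact in bijection, and the substitution is legitimate.

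With $t$ so chosen, I would simply invoke the Lemma to obtain
\begin{equation*}
\Pr\left[ X \leq \lambda \right] = \Pr\left[ X \leq \left(\frac{K}{N} - t\right)n \right] \leq \exp(-2t^2 n),
\end{equation*}
and then substitute back $t = K/N - \lambda/n$ into the exponent, yielding the claimed inequality
\begin{equation*}
\Pr\left[ X \leq \lambda \right] \leq \exp\left( -2n\left(\frac{K}{N} - \frac{\lambda}{n}\right)^2 \right).
\end{equation*}

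Since the derivation is a mechanical reparametrization, there is no genuine obstacle; the only point demanding care is the range-matching described above, ensuring we never apply the Lemma outside its stated hypothesis $0 < t < K/N$. I would also note in passing that the result is most naturally read as expressing the Lemma's bound in ``additive threshold'' form: the exponent decays like $\exp(-2n \cdot (\text{relative gap})^2)$, where the relative gap $K/N - \lambda/n$ measures how far the threshold $\lambda$ sits below the mean fraction $K/N$ of marked items, which is precisely the form convenient for the protocol's concentration arguments elsewhere in the paper.
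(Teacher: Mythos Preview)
Your proposal is correct and matches the paper's intent: the paper states the result as an immediate corollary of the preceding lemma without giving any proof, and your change of variables $t := K/N - \lambda/n$ with the range check is exactly the one-line substitution that justifies it.
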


\begin{lemma}[Serfling's bound for the hypergeometric distribution \cite{Greene_2017,Serf74}]
	Let $X\sim\operatorname{Hypergeometric}(N,K,n)$ be a random variable and $\lambda > 0$. It then holds that
	\begin{align*}
		\Pr \left[ \sqrt{n} \left( \frac{X}{n} - \frac{N}{K} \right) \geq \lambda \right] \leq \exp \left( - \frac{2\lambda^2}{1-\frac{n-1}{N}} \right).
	\end{align*}
\end{lemma}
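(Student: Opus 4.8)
The plan is to prove the bound by the exponential (Chernoff) method applied to the sample sum, extracting the finite-population correction factor $1/(1-\tfrac{n-1}{N})$ from a sequential martingale analysis of sampling without replacement. First I would make the sampling picture explicit: write $X=\sum_{i=1}^{n}X_i$, where $X_1,\dots,X_n$ are the first $n$ entries in a uniformly random ordering of a population $c_1,\dots,c_N\in\{0,1\}$ containing exactly $K$ ones, so that each $X_i$ is marginally $\mathrm{Bernoulli}(K/N)$ and $\mathbb{E}[X]=nK/N$. (I read the centring term as $\mu:=K/N$, the population mean, rather than the printed $N/K$.) Since the event $\sqrt n\,(X/n-\mu)\ge\lambda$ is exactly $X-n\mu\ge\lambda\sqrt n$, it suffices, for any $s>0$, to combine the Markov-type bound
\begin{equation*}
\Pr\!\left[X-n\mu\ge\lambda\sqrt n\right]\le e^{-s\lambda\sqrt n}\,\mathbb{E}\!\left[e^{s(X-n\mu)}\right]
\end{equation*}
with a sharp estimate of the moment generating function of the centred sum, and then optimise over $s$. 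The whole problem thus reduces to bounding $\mathbb{E}[e^{s(X-n\mu)}]$.

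The crucial observation is that the naive route is not sharp enough. By Hoeffding's convex-domination theorem, sampling without replacement is dominated in convex order by i.i.d.\ sampling from the same population, so $\mathbb{E}[e^{s(X-n\mu)}]$ never exceeds its with-replacement analogue; applying Hoeffding's lemma to the $n$ independent Bernoulli terms (each of range $1$) gives $\mathbb{E}[e^{s(X-n\mu)}]\le e^{ns^2/8}$ and, after optimising at $s=4\lambda/\sqrt n$, only the weaker bound $\exp(-2\lambda^2)$, i.e.\ the correction factor $1$. To recover the factor $1/(1-\tfrac{n-1}{N})$ I would instead exploit the without-replacement structure through the Doob martingale $Z_k=\mathbb{E}[X\mid X_1,\dots,X_k]$ for $k=0,\dots,n$, with $Z_0=n\mu$ and $Z_n=X$. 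Conditioned on the first $k$ draws the remaining $n-k$ form a uniform sample from the $N-k$ leftover items, giving the explicit form $Z_k=S_k+(n-k)\tfrac{K-S_k}{N-k}$ with $S_k=\sum_{i\le k}X_i$. Each increment $D_k=Z_k-Z_{k-1}$ is then affine in the fresh draw $X_k\in\{0,1\}$, so conditionally on the past it has mean zero and range $\ell_k=\tfrac{N-n}{N-k}$, which is deterministic. Feeding the conditional Hoeffding lemma into the tower of expectations yields $\mathbb{E}[e^{s(X-n\mu)}]\le\exp\!\big(s^2\sum_{k=1}^{n}\ell_k^2/8\big)$.

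The main obstacle — and essentially the only place requiring real work — is the range computation: I must verify the elementary but delicate inequality
\begin{equation*}
\sum_{k=1}^{n}\ell_k^2=(N-n)^2\sum_{k=1}^{n}\frac{1}{(N-k)^2}\le n\Big(1-\tfrac{n-1}{N}\Big),
\end{equation*}
so that the conditional ranges collapse to exactly the Serfling factor rather than the crude total $n$; this shrinkage reflects that each draw removes an item and leaves a less variable residual population, and it is the quantitative gain over the i.i.d.\ comparison. Granting it, the moment generating function is bounded by $\exp\!\big(n(1-\tfrac{n-1}{N})s^2/8\big)$, and choosing $s=4\lambda/\big(\sqrt n(1-\tfrac{n-1}{N})\big)$ in the Chernoff bound gives the claimed
\begin{equation*}
\Pr\!\left[\sqrt n\Big(\tfrac{X}{n}-\tfrac{K}{N}\Big)\ge\lambda\right]\le\exp\!\left(-\frac{2\lambda^2}{1-\frac{n-1}{N}}\right).
\end{equation*}
As a sanity check, letting $N\to\infty$ with $n$ fixed sends the correction factor to $1$ and recovers the with-replacement bound $\exp(-2\lambda^2)$, confirming that the estimate degrades gracefully to the i.i.d.\ case.
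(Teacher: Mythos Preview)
The paper does not prove this lemma at all; it is stated with citations to Serfling (1974) and Greene--Wellner (2017) and used only as a black box to obtain Corollary~\ref{cor:upper_tail}. There is therefore no in-paper argument to compare your proposal against.

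On its own merits your route is sound. The Doob-martingale construction $Z_k=\mathbb{E}[X\mid X_1,\dots,X_k]=S_k+(n-k)\tfrac{K-S_k}{N-k}$ is indeed a martingale, and your range computation $\ell_k=(N-n)/(N-k)$ is correct. (Serfling's original 1974 argument is organised around a \emph{reverse} martingale in the number of remaining draws; your forward Doob martingale with Azuma--Hoeffding is a legitimate alternative that lands on the same finite-population factor.) Your reading of the centring constant as $K/N$ rather than the printed $N/K$ is also correct; the statement as written contains a typo.

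The only step you explicitly leave open is the range-sum inequality $\sum_{k=1}^{n}\ell_k^{2}\le n\bigl(1-\tfrac{n-1}{N}\bigr)$. It is true and short: set $m:=N-n$ and reindex by $j:=N-k\in\{m,\dots,N-1\}$; since $j\ge m$ one has $\tfrac{m^{2}}{j^{2}}\le\tfrac{m(m+1)}{j(j+1)}$, hence
\begin{equation*}
\sum_{k=1}^{n}\ell_k^{2}=m^{2}\sum_{j=m}^{N-1}\frac{1}{j^{2}}\le m(m+1)\sum_{j=m}^{N-1}\Bigl(\frac{1}{j}-\frac{1}{j+1}\Bigr)=m(m+1)\Bigl(\frac{1}{m}-\frac{1}{N}\Bigr)=\frac{(m+1)(N-m)}{N}=n\Bigl(1-\frac{n-1}{N}\Bigr).
\end{equation*}
With this filled in, your argument is complete.
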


\begin{corollary} \label{cor:upper_tail}
	Let $X\sim\operatorname{Hypergeometric}(N,K,n)$ be a random variable and $\lambda > \frac{nK}{N}$. It then holds that
	\begin{align*}
		\Pr \left[ X \geq \lambda \right] \leq \exp \left( -2n \left( \frac{\lambda}{n} - \frac{K}{N} \right)^2 \right).
	\end{align*}
\end{corollary}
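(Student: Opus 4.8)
The plan is to derive this upper-tail bound directly from Serfling's bound (the preceding Lemma), whose right-hand side already exhibits the desired Gaussian-type decay; the only work is a change of variables matching the event $\{X \geq \lambda\}$ to the standardised form appearing in Serfling's inequality, followed by a simplification of the finite-population correction factor. Throughout I read the centering term in Serfling's bound as the mean $K/N$ of $X/n$ for the hypergeometric distribution.

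First I would rewrite the target event. Since $n > 0$, the event $\{X \geq \lambda\}$ coincides with $\{X/n \geq \lambda/n\}$, and subtracting the mean $K/N$ and multiplying by $\sqrt{n}$ produces the equivalent event
\begin{equation*}
\left\{ \sqrt{n}\left( \frac{X}{n} - \frac{K}{N} \right) \geq \sqrt{n}\left( \frac{\lambda}{n} - \frac{K}{N} \right) \right\}.
\end{equation*}
I would then set the Serfling deviation parameter to $\mu := \sqrt{n}\left( \lambda/n - K/N \right)$. The hypothesis $\lambda > nK/N$ is precisely what guarantees $\mu > 0$, so Serfling's bound (which applies to a positive deviation) gives, using $\mu^2 = n(\lambda/n - K/N)^2$,
\begin{equation*}
\Pr\left[ X \geq \lambda \right] \leq \exp\left( -\frac{2\mu^2}{1 - \frac{n-1}{N}} \right) = \exp\left( -\frac{2n\left( \lambda/n - K/N \right)^2}{1 - \frac{n-1}{N}} \right).
\end{equation*}

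The final step, and the only point requiring a moment's care, is to replace the denominator by $1$. Because $1 \leq n \leq N$, the correction factor satisfies $1 - \frac{n-1}{N} \in (0, 1]$, so its reciprocal is at least $1$; multiplying the nonnegative quantity $2n(\lambda/n - K/N)^2$ by a factor $\geq 1$ only increases it, hence negating it only decreases it, and monotonicity of $\exp$ yields
\begin{equation*}
\exp\left( -\frac{2n\left( \lambda/n - K/N \right)^2}{1 - \frac{n-1}{N}} \right) \leq \exp\left( -2n\left( \frac{\lambda}{n} - \frac{K}{N} \right)^2 \right),
\end{equation*}
which is the asserted bound. I do not expect a substantial obstacle: the corollary is a deliberate loosening of the sharper Serfling inequality, and the chief thing to verify is a pair of sign checks, namely that the positivity of $\mu$ (needed to invoke Serfling) follows from $\lambda > nK/N$, and that discarding the correction factor moves the bound in the weakening rather than the strengthening direction.
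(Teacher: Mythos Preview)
Your proof is correct and follows exactly the route the paper intends: the corollary is placed immediately after Serfling's bound and is meant to be read as a direct specialisation of it (with the evident typo $N/K$ corrected to $K/N$), followed by dropping the finite-population correction factor. Your change of variables and the monotonicity argument for discarding $1 - \frac{n-1}{N}$ are precisely what is needed.
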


Note the symmetry of Corollary~\ref{cor:lower_tail} and Corollary~\ref{cor:upper_tail}.

\begin{lemma}[Hoeffding's inequality for the binomial distribution] \label{lemma:hoeffding_binomial}
	Let $X\sim\operatorname{Binomial}(n,p)$ be a random variable. For any $k \leq np$ it then holds that
	\begin{align*}
		\Pr \left[ X \leq k \right] \leq \exp \left( -2\frac{(np-k)^2}{n} \right).
	\end{align*}
	Similarly, for any $k \geq np$ it holds that
	\begin{align*}
		\Pr \left[ X \geq k \right] \leq \exp \left( -2\frac{(np-k)^2}{n} \right).
	\end{align*}
\end{lemma}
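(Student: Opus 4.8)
The plan is to prove both tail bounds by the exponential moment (Chernoff--Cram\'er) method, which reduces the entire statement to a single estimate on the moment generating function of a bounded random variable. First I would write $X = \sum_{i=1}^n X_i$ as a sum of independent Bernoulli$(p)$ variables, each taking values in $[0,1]$. For the upper tail, fix $k \geq np$; for any $s > 0$, Markov's inequality applied to the nonnegative variable $e^{sX}$ gives $\Pr[X \geq k] = \Pr[e^{sX} \geq e^{sk}] \leq e^{-sk}\,\mathbb{E}[e^{sX}]$, and independence lets me factor $\mathbb{E}[e^{sX}] = \prod_{i=1}^n \mathbb{E}[e^{sX_i}]$. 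The whole proof then rests on bounding each factor sharply and optimizing over $s$.

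The core tool is Hoeffding's lemma: for a random variable $Y$ with $\mathbb{E}[Y]=0$ and $Y \in [a,b]$ almost surely, $\mathbb{E}[e^{sY}] \leq \exp\!\big(s^2(b-a)^2/8\big)$. I would apply it to the centred variables $X_i - p \in [-p,\,1-p]$, for which $b-a = 1$, obtaining $\mathbb{E}[e^{s(X_i - p)}] \leq e^{s^2/8}$ and hence $\mathbb{E}[e^{s(X-np)}] \leq e^{ns^2/8}$. Substituting back yields $\Pr[X \geq k] \leq \exp\!\big(-s(k-np) + ns^2/8\big)$ for every $s > 0$. Minimizing the exponent over $s$ gives the optimizer $s^\ast = 4(k-np)/n \geq 0$, and plugging it in collapses the bound to $\exp\!\big(-2(k-np)^2/n\big) = \exp\!\big(-2(np-k)^2/n\big)$, which is precisely the claimed upper-tail inequality.

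The main obstacle is establishing Hoeffding's lemma itself, the only genuinely non-routine ingredient. I would prove it by convexity: since $e^{sy}$ is convex, for $y \in [a,b]$ one has $e^{sy} \leq \frac{b-y}{b-a}e^{sa} + \frac{y-a}{b-a}e^{sb}$. Taking expectations and using $\mathbb{E}[Y] = 0$ eliminates the linear-in-$y$ term and leaves an upper bound of the form $e^{L(u)}$, where $u := s(b-a)$, $\rho := -a/(b-a) \in [0,1]$, and $L(u) = -\rho u + \log\!\big(1 - \rho + \rho e^u\big)$. A direct computation gives $L(0) = 0$, $L'(0) = 0$, and $L''(u) = \nu(1-\nu) \leq \tfrac14$ with $\nu := \rho e^u/(1-\rho+\rho e^u) \in [0,1]$. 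A second-order Taylor expansion of $L$ at $0$ then yields $L(u) \leq u^2/8$, i.e.\ $\mathbb{E}[e^{sY}] \leq e^{s^2(b-a)^2/8}$. The crux is the uniform bound $L''(u) \leq \tfrac14$, which controls the exponent for all $s$ simultaneously and is what produces the clean constant $2$ in the final rate.

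Finally, the lower tail follows by symmetry and requires no new work. For $k \leq np$ I would write $\Pr[X \leq k] = \Pr[n - X \geq n - k]$ and observe that $n - X \sim \operatorname{Binomial}(n, 1-p)$ has mean $n(1-p) \leq n-k$, so the already-proved upper-tail bound applies to $n-X$ at threshold $n-k$. Since $(n-k) - n(1-p) = np - k$, this gives $\Pr[X \leq k] \leq \exp\!\big(-2(np-k)^2/n\big)$. Equivalently, one may simply repeat the Chernoff argument with $s < 0$; either route delivers the stated symmetric bound and completes the proof.
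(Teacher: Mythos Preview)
Your proof is correct and follows the standard Chernoff--Cram\'er route via Hoeffding's lemma; each step (Markov on $e^{sX}$, factorisation by independence, the convexity proof of Hoeffding's lemma with the $L''(u)\le 1/4$ bound, optimisation over $s$, and the symmetry argument for the lower tail) is sound.

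The paper, however, does not prove this lemma at all: it is listed in the appendix of preliminary probability inequalities as a known tool, with a reference to a standard probability text, and is then simply invoked in the noise-robustness analysis. So there is no ``paper's own proof'' to compare against; your argument supplies exactly the textbook derivation the paper implicitly defers to.
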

 
\section{Formal Security Definitions}
\label{app:form_def}
We model $N$-round two party protocols between players $A$ (the honest Client) and $B$ (the potentially dishonest Server) as a succession of $2N$-CPTP maps $\{\mathcal{E}_i\}_{i\in [1,N]}$ and $\{\mathcal{F}_j\}_{j\in[1,N]}$. The maps $\{\mathcal E_i\}_i$ act on $\mathcal A$, $A$'s register, and $\mathcal C$, a shared communication register between $A$ and $B$. Similarly, the maps $\mathcal \{F_j\}_j$ act on $\mathcal B$ and $\mathcal C$. Note that $\mathcal B$ and the maps $\{\mathcal F_j\}_j$ can be chosen arbitrarily by $B$ and thus, unless $B$ is specified to be behaving honestly, there is no guarantee that they are those implied by our protocol. Since we are only interested in protocols where $A$ is providing a classical input $x$, we will equivalently write the input as the corresponding computational basis state $\ket x$ used to initialize $\mathcal A$, whereas $\mathcal B$ and $\mathcal C$ are initialized in a fixed state $\ket 0$.

Below, we denote by $\Delta(\rho, \sigma) = \frac{1}{2} \|\rho-\sigma\|$, the distance on the set of density matrices induced by the trace norm $\| \rho \| = \Tr \sqrt{\rho^\dagger\rho}$. We first define $\mathcal S$ the ideal resource for verifiable delegated quantum computation and then the local-properties from \cite{DFPR14}.

\paragraph{Ideal Resource for Verifiable Delegated Quantum Computation.} 
The ideal resource $\mathcal S$ has interfaces for two parties, $A$ and $B$. The $A$-interface takes two inputs: a classical input string $x$ and the description of $\mathcal U$, the computation to perform. The $B$-interface is filtered by a bit $b$. When $b=0$, there is no further legitimate input from $B$, while for $b=1$, it is allowed to send a bit $c$ that determines the output of the computation available at $A$'s interface. When $b=0$ or $c=0$, the output at $A$'s interface is equal to $\mathcal{M}_{\mathit{Comp}} \circ \mathcal{U}(\ket x)$, where $\mathcal{M}_{\mathit{Comp}}$ is the computational basis measurement. This corresponds to a ``no cheating'' behaviour. When $c=1$, $B$ decided to cheat and $A$ receives the $\Abort$ message which can be given as a quantum state of $\mathcal A$ which is taken orthogonal to any other possible output state. At $B$'s interface, $\mathcal S$ outputs nothing for $b=0$ while for $b=1$, $B$ receives $l(\mathcal U, x)$, the permitted leakage. For generic MBQC computations, the permitted leakage is set to $G$, the graph used in the computation. When $G$ is a universal graph for MBQC computation, the permitted leakage reduces to an upper-bound on the size of the computation $\# \mathcal U$.

For this ideal resource, the blindness is an immediate consequence of the server receiving at most the permitted leak, while verifiability is a consequence of the computation being correct when the server is not cheating while being aborted otherwise.

\paragraph{$\epsilon_{\mathit{cor}}$-Local-Correctness.} 
Let $\mathcal{P}_{AB}$ be a two-party protocol as defined above with the honest CPTP maps for players A and B. We say that such a protocol implementing $\mathcal{U}$ is \emph{$\epsilon_{cor}$-locally-correct} if for all possible inputs $x$ for $A$ we have:

\begin{equation}
\label{eq:correct}
\Delta\left(\Tr_B \circ \mathcal{P}_{AB} (\ket x), \mathcal U (\ket x)\right) \leq \epsilon_{cor}
\end{equation}

\paragraph{$\epsilon_{\mathit{bl}}$-Local-Blindness.}
\label{proof:local_blindness}
Let $\mathcal{P}_{AB}$ be a two-party protocol as defined above, and where the maps $\{\mathcal E_i\}_i$ are the honest maps. We say that such protocol is \emph{$\epsilon_{\mathit{bl}}$-locally-blind} if, for each choice of $\{\mathcal F_i\}_i$ there exists a CPTP map $\mathcal F' : L(\mathcal B) \rightarrow L(\mathcal B)$ such that, for all inputs $x$ for $A$, we have:

\begin{equation}
\label{eq:blind}
\Delta\left(\Tr_A \circ \mathcal{P}_{AB} (\rho), \mathcal F' \circ \Tr_A(\ket x)\right) \leq \epsilon_{\mathit{bl}}
\end{equation}

\paragraph{$\epsilon_{\mathit{ind}}$-Independent Verification.} 
Let $\mathcal{P}_{AB}$ be a verifiable 2-party protocol as defined above, where the maps $\{\mathcal E_i\}_i$ are the honest maps. Let $\bar B$ be a qubit extending $B$'s register and initialized in $\ket 0$. Let $\mathcal{Q}_{A\bar{B}} : L(\mathcal{A}\otimes\bar{\mathcal{B}}) \rightarrow L(\mathcal{A}\otimes\bar{\mathcal{B}})$ be a CPTP map which, conditioned on $\mathcal{A}$ containing the state $\ket{\Abort}$, switches the state in $\bar{\mathcal{B}}$ from $\ket 0$ to $\ket 1$ and does nothing in the other cases. 

We say that such a protocol's verification procedure is \emph{$\epsilon_{\mathit{ind}}$-independent} from player A's input if there exists CPTP maps $\mathcal{F}'_i : L(\mathcal{C} \otimes \mathcal{B}\otimes\bar{\mathcal{B}}) \rightarrow L(\mathcal{C} \otimes \mathcal{B}\otimes\bar{\mathcal{B}})$ such that:

\begin{equation}
\label{eq:indep}
\Delta\left(\Tr_A \circ \mathcal{Q}_{A\bar{B}} \circ \mathcal{P}_{AB}(\rho), \Tr_A \circ \mathcal{P}'_{AB\bar{B}}(\rho)\right) \leq \epsilon_{\mathit{ind}}
\end{equation}

where

\[
\mathcal{P}'_{AB\bar{B}} := \mathcal{E}_1 \circ \mathcal{F}'_1 \circ \ldots \circ \mathcal{E}_n \circ \mathcal{F}'_n
\]

\paragraph{$\epsilon_{\mathit{ver}}$-Local-Verifiability.} 
Let $\mathcal{P}_{AB}$ be 2-party protocols as defined above where the maps for $A$ are the honest maps, while the maps $\{\mathcal F_j\}_j$ for $B$ are not necessarily corresponding to the ideal (honest) ones. Let $x$ be the input given by $A$ in the form of a computational state $\ket x$ and $\mathcal U$ the computation it wants to perform. The protocols $\mathcal P_{AB}$ are \emph{$\epsilon_{\mathit{ver}}$-locally-verifiable} for $A$ if for each choice of CPTP maps $\{\mathcal F_j\}_{j}$, there exists $p \in [0,1]$ such that we have:

\[
\Delta\Bigl(\tr_B \mathcal P_{AB}(\ket x), p \mathcal U(\ket x) + (1-p) \ketbra{\Abort}) \leq \epsilon_{\mathit{ver}}
\]
 
\section{Composable Security}
\label{app:comp_sec}
In the paragraphs below, we show that our protocol satisfies each of the stand-alone criteria before combining them to get composable security.

\paragraph{Perfect Local-Correctness.} 
On perfect (non-noisy) devices, local-correctness is implied by the correctness of the underlying UBQC Protocol. 
This is because all the completed computation rounds correspond to the same deterministic UBQC computation, and that on such devices, general UBQC Protocols have been proven to be perfectly correct \cite{bfk, DFPR14}.
Thus $\epsilon_{\mathit{cor}} = 0$. 

\paragraph{Perfect Local-Blindness.} 
In case the computation is accepted, each round looks exactly like a UBQC computation to the Server.
Therefore the blindness comes directly from the composability of the various UBQC rounds that make our protocol~\cite{DFPR14}.
In case the computation is aborted, we need to take into account the fact that a possibly malicious Server could deduce the position of a trap qubit.
That could be the case if it attacked a single position in the test rounds and got caught.
Yet, as the position of the traps is not correlated to the input nor to the computation itself, knowing it does not grant additional attack capabilities to the Server, and blindness is recovered again as a consequence of the blindness of UBQC. 
More detailed statements can be found in the next section, where it is also shown that $\Redo$ requests have no effect on the local-blindness of the scheme.

\paragraph{Perfect Local-Independent-Verification.} 
Because in our protocol, the Client shares with the Server whether the computation was a success or an abort, this is trivially verified.

\paragraph{Exponential Local-Verifiability.} 
Local-verifiabili\-ty is satisfied if any deviation by the possibly malicious Server yields a state that is $\epsilon_{\mathit{ver}}$-close to a mixture of the correct output and the $\Abort$ message.
Equivalently, the probability that the Server makes the Client accept an incorrect outcome is bounded by $\epsilon_{\mathit{ver}}$.
Let $d/n$, $t/n$ and $w/t$ be the ratios of test, computation and tolerated failed test rounds.
Our protocol's local-verifiability is given by Theorem~\ref{thm:verif} and proven subsequently.

\paragraph{Proof of Exponential Composable-Security.} 
Our protocol has perfect correctness (for noiseless devices), blindness and input-independent verification. In addition, it is $\epsilon_{\mathit{ver}}$-locally-verifiable with $\epsilon_{\mathit{ver}}$ exponentially small in $n$. Therefore, by the Local-Reduction Theorem, it is $\epsilon$-composably-secure with $\epsilon = \epsilon_{\mathit{sec}} = 4\sqrt{2\epsilon_{\mathit{ver}}}$ and $\epsilon$ exponentially small in $n$.
Note that because we used the Local-Reduction Theorem to obtain fully composable security, we incurred an additional square root on our verifiability bound given by Equation~\ref{eq:reduction} and needed to satisfy the additional independence property.
This is of course not required if the protocol is only used sequentially with other schemes, which will probably be the case in early quantum computations since the machines will not be able to handle multiple protocols at the same time. In this case, the stand-alone model would be sufficient since it provides sequential composition, but would fail if parallel composition is needed.
 
\section{Proof of Perfect Local-Blindness}
\label{app:proof_lb}
\begin{proof}
To prove that Equation~\ref{eq:blind} holds for $\epsilon_{\mathit{bl}} = 0$, first note that at the end of our protocol, the Client $A$ reveals to the Server $B$ whether the computation was accepted or  aborted.
Hence, each case can be analyzed separately.
Second, we show that the interrupted rounds that have triggered a $\Redo$ can be safely ignored.
Indeed, each one of them is the begining of an interrupted UBQC computation, and, because UBQC is composable and perfectly blind \cite{DFPR14}, no information can leak to the Server through the transmitted qubits.
In addition, our protocol restricts the honest party $A$ in its ability to emit $\Redo$ requests, so that no correlations are created between the index of the interrupted rounds and $\mathcal U$ or the secret random parameters used in the rounds (angle and measurement padding, and trap preparations).
As a consequence, from the point of view of $B$, the state of the interrupted rounds is completely independent of the state of the non-interrupted ones and does not contain information regarding the input, computation or secret parameters.
That is, its partial trace over $A$ can be generated by $B$ alone.

For the non-interrupted rounds, we can invoke the same kind independence argument between the computation rounds and the test rounds.
As a result blindness of our protocol stems from the blindness of the underlying computation rounds.
In case the full protocol is a success, we can rely on the composability of the perfect blindness of each UBQC computation round to have perfect local-blindness.
For an abort, we can consider a situation that is more advantageous for $B$ by supposing that alongside the $\Abort$ message sent by $A$, it also gives away the location of the trap qubits.
In this modified situation, the knowledge of the computation being aborted does not bring additional information to $B$ as it only reveals that one of the attacked position was a trap qubit, which $B$ now already knows.
Using our independence argument between trap location on the one hand and the inputs, computation and other secret parameters, we conclude that revealing the location of the trap qubits does not affect the blindness of the computation rounds.
Hence, using composability again and combining the abort and accept cases, we arrive at Equation~\ref{eq:blind} with $\epsilon_{\mathit{bl}} = 0$.

\end{proof}
 
\section{Proof of Verifiability}
\label{app:proof-ver}
\begin{theorem}[Local Verifiability of Protocol \ref{prot:MQ-VBQC}]
\label{thm:verif}
Let $0 < w/t < \frac{1}{k}\cdot \frac{2p-1}{2p-2}$ and $0 < d/n < 1$ be fixed ratios, for $k$ different test rounds and where $p$ is the inherent error probability of the BQP computation.
Then, Protocol~\ref{prot:MQ-VBQC} is $\epsilon_{\mathit{ver}}$-locally-verifiable for exponentially-low $\epsilon_{\mathit{ver}}$.
\end{theorem}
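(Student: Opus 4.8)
The plan is to translate $\epsilon_{\mathit{ver}}$-local-verifiability into a purely combinatorial statement about ``corrupted rounds'' and then close it with the hypergeometric and binomial tail bounds of Appendix~\ref{app:tail_bounds}. By the definition of $\epsilon_{\mathit{ver}}$-local-verifiability it suffices to exhibit, for every Server strategy $\{\mathcal F_j\}_j$, a weight $q\in[0,1]$ with $\tr_B\mathcal P_{AB}(\ket x)$ at trace distance $\epsilon_{\mathit{ver}}$ from $q\,\mathcal U(\ket x)+(1-q)\ketbra{\Abort}$, where $\mathcal U$ is the ideal (amplified, deterministic) BQP computation with output $F(x)$. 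Taking $q:=\Pr[\text{Client outputs }\ok]$, the diagonal structure of the states reduces this to the outcome-probability bound $\Pr[\mathcal A\cap\mathcal W]\le\epsilon_{\mathit{ver}}$, where $\mathcal A$ is the event ``the Client accepts'' and $\mathcal W$ the event ``the majority output $\ne F(x)$''. The first move is a reduction to a round-wise Pauli attack: purify the Server and commute its global deviation through the (mutually commuting) $\CZ$ layer and through the measurements of all $n$ rounds. Because in every round each transmitted qubit carries an independent, uniform, hidden one-time pad — the preparation angle $\theta_v$ (or the dummy bit $d_v$) and the outcome mask $r_v$ — the standard UBQC twirling argument (\cite{bfk,fk,KW15,DFPR14}) turns the deviation on round $j$ into a classical mixture over subsets $E_j\subseteq V$, the effect of label $E_j$ being that an extra $Z$ acts on every qubit of $E_j$ at measurement time and the round is honest otherwise. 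The point left open in earlier works is that the Server may entangle its deviation across rounds; I would handle this by deferring a measurement of the Server's private register to obtain a genuine classical distribution over $(E_1,\dots,E_n)$, and then invoke perfect local-blindness (Appendix~\ref{app:proof_lb}) to conclude that this distribution is independent of the Client's secret partition $(C,T)$, of the test-round colours $\{\mathsf V_j\}_{j\in T}$, and of $\{\phi_v\}_v$ and $x$. Write $m:=\#\{j:E_j\ne\emptyset\}$ and call such rounds \emph{harmful}.

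Second, harmful rounds are caught by the tests with constant probability. Fix $(E_1,\dots,E_n)$ and let $j$ be a harmful test round; pick $v\in E_j$. Since $\{V_i\}_{i\in[k]}$ is a $k$-colouring, $v$ lies in exactly one colour class, so with probability $1/k$ over the independent choice of $\mathsf V_j$ the qubit $v$ is a trap; the extra $Z$ on it flips $b_v$ (because $Z$ anticommutes with every $\{\ket{+_\alpha},\ket{-_\alpha}\}$ measurement), the relation $b_v=r_v\oplus d_v$ with $d_v=\bigoplus_{i\in N_G(v)}d_i$ fails, and the test round fails. As the colour choices are mutually independent and, by blindness, independent of $(E_1,\dots,E_n)$, conditioned on which test rounds are harmful the failures are independent Bernoulli trials of success probability $\ge 1/k$, so $c_{\mathit{fail}}$ stochastically dominates $\operatorname{Binomial}(m_T,1/k)$, where $m_T$ is the number of harmful test rounds. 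By blindness $m_T$ is hypergeometric with population $n$, $m$ marked rounds and $t$ draws, concentrated at $mt/n$. Combining Corollary~\ref{cor:lower_tail} with Lemma~\ref{lemma:hoeffding_binomial} and using the hypothesis $w/t<\tfrac1k\cdot\tfrac{2p-1}{2p-2}$, there is a constant $\delta>0$ with $\beta:=k(w/t)+\delta<\tfrac{2p-1}{2p-2}$ such that whenever $m/n\ge\beta$ one has $c_{\mathit{fail}}\ge w$, hence the Client aborts, except with probability $\exp(-\Omega(n))$.

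Third, if the Client accepts then the majority vote is correct, provided $m/n<\beta$. A round $j\in C$ outputs the wrong value only if it is harmful, or it is non-harmful but the inherent BQP error occurs. By blindness the number $m_C$ of harmful computation rounds is hypergeometric with population $n$, $m$ marked and $d$ draws, concentrated at $md/n$ (Corollary~\ref{cor:upper_tail}); on the remaining $d-m_C$ computation rounds the Server acts honestly, so by perfect correctness of UBQC each independently outputs $F(x)$ with probability $\ge 1-p$, whence the number of inherent errors there is dominated by $\operatorname{Binomial}(d-m_C,p)$ and concentrates below $p(d-m_C)$ (Lemma~\ref{lemma:hoeffding_binomial}). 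Thus the number of wrong computation rounds is at most $\tfrac{(m+p(n-m))d}{n}+o(d)$ except with probability $\exp(-\Omega(n))$. Since $m/n<\beta<\tfrac{2p-1}{2p-2}$ is equivalent — with constant slack — to $\tfrac{(m+p(n-m))d}{n}<\tfrac{d}{2}$, fewer than $d/2$ computation rounds are wrong, so the majority output equals $F(x)$, except with probability $\exp(-\Omega(n))$.

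Finally I would assemble the pieces: conditioned on any $(E_1,\dots,E_n)$, the second step gives $\Pr[\mathcal A]\le\exp(-\Omega(n))$ when $m/n\ge\beta$, and the third step gives $\Pr[\mathcal W\mid\mathcal A]\le\exp(-\Omega(n))$ when $m/n<\beta$; in either case $\Pr[\mathcal A\cap\mathcal W\mid(E_j)_j]\le\exp(-\Omega(n))$, and averaging over $(E_j)_j$ yields $\Pr[\mathcal A\cap\mathcal W]\le\exp(-\Omega(n))$, i.e.\ $\epsilon_{\mathit{ver}}=\exp(-\Omega(n))$ with convex weight $q=\Pr[\mathcal A]$. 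The $\Redo$ feature is immaterial, since each restarted round is independent with fresh randomness, so the analysis applies verbatim to the $n$ rounds that actually complete. The hard part will be the first step: making the passage to a \emph{classical, secret-independent} distribution over the per-round $Z$-supports $E_j$ fully rigorous in the presence of cross-round entanglement, including the bookkeeping of the flow corrections $s_v^X,s_v^Z$, the dummy-induced parity $d_v=\bigoplus_{i\in N_G(v)}d_i$ on traps, and the input pad $\tilde\theta_v=\theta_v+x_v\pi$. Everything downstream is a routine application of Corollaries~\ref{cor:lower_tail} and~\ref{cor:upper_tail} and Lemma~\ref{lemma:hoeffding_binomial}, together with the choice of the margin $\delta$.
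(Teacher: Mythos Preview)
Your proposal is correct and follows essentially the same route as the paper: reduce the adversary to a convex combination of Pauli deviations with secret-independent weights, parametrise by the number $m$ of non-trivially attacked rounds, then split into the regime $m/n\gtrsim\tfrac{2p-1}{2p-2}$ (many failed tests via hypergeometric plus binomial tails, hence abort) and the regime $m/n\lesssim\tfrac{2p-1}{2p-2}$ (pessimistically count all attacked computation rounds as wrong, add inherent BQP errors on the unattacked ones, and conclude the majority is correct). The paper carries out the Pauli twirl and the cross-term cancellation explicitly---working in the post-$\mathcal P$ basis where the outcome-flipping component is $X$-type rather than your ``extra $Z$'', a purely notational difference---and also includes a paragraph explaining why one \emph{must} count every attacked computation round as potentially wrong (blindness does not rule out deviations whose effect depends on the honest output), a point your pessimistic bookkeeping handles correctly without comment.
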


\begin{proof}
Proving verifiability of a computation amounts to upper-bounding the probability of yielding a wrong output while not aborting.
This could be the result of the inherent randomness of the BQP computation that gives the wrong outcome with probability $p$, or of the server deviating from the instructed computation.
In the following, although rounds are expected to be run sequentially, the proof will examine the state of the {\em combined computation}.
This state corresponds to the server having simulaneous unrestricted access to all quantum systems sent by the client and possibly operating on them as a whole irrespectively of the underlying rounds they belong to.
In particular, the server could decide to perform some action on a qubit given measurements in one or several of the underlying runs, or to entangle the various underlying runs together. 

Note that, because the parties can only ask for redoing a run independently of the input, of the computation, of used randomness and of the output of the computation itself (comprising the result of trap measurements), interrupted runs can be safely ignored in the verification analysis as the state corresponding to these runs is uncorrelated to that of the completed runs.

\paragraph{Output of the combined computation.}
First, consider the output density operator $B(\{\mathcal F_j\}_j, \nu)$ representing all the classical messages the Client $A$ receives during its interaction with the Server $B$, comprising the final message containing the encrypted measurement outcomes.
Below, the CPTP maps $\{\mathcal F_j\}_j$ represent the chosen deviation of $B$ on the combined computation.
By encoding the classical messages as quantum states in the computational basis, the output density operator satisfies:
\begin{align}
\label{eq:output}
B(\{\mathcal F_j\}_j,\nu) = \Tr_B \bigg\{
&  \sum_b \ketbra{b+c_r}{b} \mathcal F \mathcal{P} \times \nonumber \\
& \quad \left(\ketbra{0}_B \otimes \ketbra{\Psi^{\nu,b}}\right) \times \nonumber \\
& \quad \mathcal{P}^\dagger {\mathcal F}^\dagger  \ketbra{b}{b+c_r}\bigg\} 
\end{align}
where \(b\) is the list of measurement outcomes defining the computation branch;
\(\nu\) is a composite index relative to the secret parameters chosen by $A$, i.e.~the type of each underlying run, the padding of the measurement angles and measurements outcomes and the trap setup;
\(\ketbra{b+c_r}{b}\) ensures that only the part corresponding to the current computation branch is taken into account and removes the One-Time-Pad encryption on non-output and non-trap qubits while leaving output and trap qubits unaffected, i.e.~encrypted;
\(\ketbra{0}_B\) is some internal register for $B$ in a fixed initial state;
and \(\ket{\Psi^{\nu,b}}\) is the state of the qubits sent by $A$ to $B$ at the beginning of the protocol tensored with quantum states representing the measurement angles of the computation branch $b$.

To obtain this result, the line of proof of~\cite{fk} can be applied to the combined computation.
This works by noting that for a given computation branch $b$ and given random parameters $\nu$, all the measurement angles are fully determined.
Therefore, provided that the computation branch is $b$, the measurement angles can be included into the initial state.
This defines $\ket{\Psi^{\nu,b}}$.
Then, each $\mathcal F_j$ is decomposed into an honest part and a pure deviation.
All the deviations are commuted and collected into $\mathcal F$ applied after $\mathcal P$, the unitary part of honest protocol, is applied.
The projections onto $\ket b$ then ensures that, after the deviation induced by $B$, the perceived computation branch is $b$.
This, together with the decrypting of non-output non-trap qubits, gives Equation~\ref{eq:output}.

\paragraph{Probability of failure.}
Recall that a failure for the combined computation on input $x$ occurs when the result after decrypting the outputs and performing the majority vote differs from $F(x)$ while the computation is accepted.

For the combined computation to be accepted, no more than \(w\) test runs should have a trap qubit measurement outcome opposite to what was expected.
Let \(\fT\) denote the set of trap qubits which is determined by $T$, the set of test runs, and the type of each test run.
In absence of any deviation on the combined computation, their expected value is \(\ket{r_\fT} = \bigotimes_{\ft \in \fT}\ket{r_\ft}\) where \(r_{\fT} = (r_{\ft})_{\ft \in \fT}\) denotes the measurement outcome padding values restricted to trap qubits.
Therefore, the projector onto the states of the trap qubits yielding to an accepted combined computation can be written as \( Q_\perp = \sum_{\fw \in \fW} X_{\fT}^{\fw} \ketbra{r_{\fT}} X_{\fT}^{\fw}\) with \(X_{\fT}^{\fw} = \bigotimes_{\ft\in \fT} X_{\ft}^{\fw_{\ft}}\), and where \(\fW\) is the set of length \(|\fT|\) binary vectors \(\fw\) that have at least a one in no more than \(w\) underlying (test) runs.

Similarly, define by $\fO$ the set of output qubits.
The correct value for these output qubit is $\ket{F(x)_{\fO} + r_{\fO}}$.
Then, for  \(\fV\) the set of length \(|\fO|\) binary vectors \(\fv\) that have at least $d/2$ ones in the underlying (computation) runs, the operator $P_\perp = \sum_{\fv \in \fV} X_{\fO}^{\fv} \ketbra{F(x) + r_{\fO}} X_{\fO}^{\fv}$ with \(X_{\fO}^{\fv} = \bigotimes_{\fo\in \fO} X_{\fo}^{\fv_{\fo}}\) is the projector onto the subspace of states that yield an incorrect result for the whole computation.
This is because when each output has been decrypted by the Client -- the one-time-padding $r_{\fO}$ is removed -- the majority vote will output $F(x) + 1$ because more than half of the outputs are equal to $F(x) +1$. 

Combining these two projectors allows to write the probability of failure:
\begin{align*}
& \Pr[\fail] = \\
& \quad \sum_{\nu}\sum_{b,k,\sigma,\sigma'} \Pr[\nu] \Tr \Big \{ \left(P_\perp \otimes  Q_\perp  \right) \times \\
& \quad \left( \alpha_{k\sigma}\alpha^*_{k\sigma'} \ketbra{b+c_r}{b} \sigma \mathcal{P}\ketbra{\Psi^{\nu, b}} \mathcal{P}^\dagger \sigma' \ketbra{b}{b+c_r}\right) \Big\}  
\end{align*}
where \(\mathcal F\) has been decomposed into Kraus operators indexed by \(k\), that were in turn decomposed onto the Pauli basis through the coefficients \(\alpha_{k\sigma}\) and \(\alpha_{k\sigma'}\).
Consequently, \(\sigma\) and \(\sigma'\) are Pauli matrices.

Using the explicit expressions for $P_\perp$ and $Q_\perp$, the above formula can be simplified:
\begin{align*}
  \Pr[\fail] = & \sum_{\nu}\sum_{\fv \in \fV, \fw \in \fW}\sum_{b',k,\sigma,\sigma'} \Pr[\nu] \Big \{ \\
  & \quad \bra{F(x)_{\fO} +r_{\fO}} \otimes \bra{r_{\fT}} \otimes \bra{b'} ( X_{\fO}^{\fv} \otimes X_{\fT}^{\fw}) \times \\
  & \quad \left( \alpha_{k\sigma}\alpha^*_{k\sigma'} \mathcal{P}\ketbra{\Psi^{\nu, b}} \mathcal{P}^\dagger \sigma' \right) \\
  & \quad (X_{\fO}^{\fv} \otimes X_{\fT}^{\fw}) \ket{F(x)_{\fO} +r_{\fO}} \otimes \ket{r_{\fT}} \otimes \ket{b'} \Big \}
\end{align*}
where $b'$ is the binary vector obtained from $b$ by restricting it to non-output and non-trap qubits.
This was obtained using the circularity of the trace and the fact that $\sum_b \bra{F(x)_{\fO}+r_{\fO}} \otimes \bra{r_{\fT}} ( X_{\fO}^{\fv}\otimes X_{\fT}^{\fw} ) \ketbra{b+c_r}{b} = \sum_{b'} \bra{F(x)_{\fO}+r_{\fO}} \otimes \bra{r_{\fT}} \otimes \ketbra{b'+c_r}{b'} ( X_{\fO}^{\fv}\otimes X_{\fT}^{\fw} )$ since there is no decoding for output and trap qubits -- i.e. $c_r$ is 0.

\paragraph{Using blindness of the scheme.} 
At this point, standard proofs of verifiability sum over the secret parameters defining the encryption to twirl the deviation of the Server and trace out non-trap qubits.
Here, because it is necessary to assess the probability of having more than half of the output qubits yielding the wrong measurement output $F(x) +1$, the trace shall be taken on non-trap and non-output qubits only.

The design of the protocol yielding the combined computation ensures blindness.
This implies that the resulting state of any set of qubits after applying \(\mathcal P\) and taking the average over their possible random preparations parameters is a completely mixed state.
This can be applied in the above equation for the set of non-output and non-trap qubits.
For output and trap qubits, the inner products must be computed before taking the sum over their random preparation parameters \(\nu_{\fO}\) and \(\nu_{\fT}\) respectively.

This gives:
\begin{align*}
\Pr[\fail] = & 
\sum_{\nu_{\fO},\nu_{\fT}, \fu}\sum_{\fv \in \fV, \fw \in \fW}\sum_{b',k,\sigma, \sigma'} \Pr[\nu_{\fO},\nu_{\fT}] \alpha_{k\sigma}\alpha^*_{k\sigma'}  \times \bigg\{ \\
& \quad  \bra{F(x)_{\fO}+r_{\fO}}  \otimes \bra{r_{\fT}} \otimes \bra{b'} (X_{\fO}^{\fv} \otimes X_{\fT}^{\fw}) \times\\
& \quad \sigma  \left( \ketbra{s_{\fO} + r_{\fO}} \otimes \ketbra{r_{\fT}} \otimes \frac{\Id}{\Tr \Id} \right) \sigma' \times \\
& \quad  (X_{\fO}^{\fv} \otimes X_{\fT}^{\fw}) \ket{F(x)_{\fO}+r_{\fO}} \otimes \ket{r_{\fT}}\otimes \ket{b'} \bigg\}
\end{align*}
where $\ket{s_{\fo}}$ is the state of the output qubit $\fo \in \fO$ when no deviation is applied by the Server. 

In the above equation, the contribution of each qubit factorizes.
For \(l\notin \fO \cup \fT \), because the Pauli matrices are traceless save for the identity, the only non-vanishing terms are obtained for \(\sigma_l = \sigma'_l\), where subscript \(l\) is used to select the action of \(\sigma\) and \(\sigma'\) on qubit \(l\).
In such case, the corresponding multiplicative factor equals 1.
A direct calculation shows that, for an output qubit $\fo \in \fO$,
\begin{align*}
  \sum_{r_{\fo}} & \bra{F(x)_{\fo}+r_{\fo}} X_{\fo}^{\fv_{\fo}} \sigma_{\fo} \ket{s_{\fo} + r_{\fo}} \\
  & \quad \bra{s_{\fo} + r_{\fo}} \sigma'_{\fo} X_{\fo}^{\fv_{\fo}} \ket{F(x)_{\fo}+r_{\fo}} = 0
\end{align*}
for \(\sigma_{\fo} \neq \sigma'_{\fo}\)
Similarly, for a trap qubit $\ft \in \fT$, \(\sum_{r_{\ft}} \bra{r_{\ft}} X_{\ft}^{\fw_{\ft}} \sigma_{\ft} \ketbra{r_{\ft}} \sigma'_{\ft} X_{\ft}^{\fw_{\ft}} \ket{r_{\ft}}\) vanishes for \(\sigma_{\ft} \neq \sigma'_{\ft}\).
Combining these yields:
\begin{align*}
\Pr[\fail]
& = \sum_{\nu_{\fO},\nu_{\fT}} \sum_{\fv \in \fV, \fw \in \fW} \sum_{k,\sigma} \Pr[\nu_\fO, \nu_\fT] |\alpha_{k\sigma}|^2  \times \\
& \qquad \prod_{\fo \in \fO} |\bra{F(x)_\fo+r_\fo} X_\fo^{\fv_\fo} \sigma_\fo \ket{s_\fo+r_\fo}|^2 \times \\
& \qquad \prod_{\ft\in \fT} |\bra{r_\ft} X_\ft^{\fw_\ft} \sigma_\ft \ket{r_\ft}|^2 \\
& = \sum_{k}\sum_{\sigma} |\alpha_{k\sigma}|^2 f(\sigma) 
\end{align*}
with 
\begin{align}\label{eq:f}
f(\sigma) =  & \sum_{\nu_\fO,\nu_\fT} \sum_{\fv\in \fV, \fw\in \fW}\Pr[\nu_\fO, \nu_\fT] \times  \nonumber \\
& \quad \prod_{\fo\in \fO} |\bra{F(x)_\fo+r_\fo} X_\fo^{v_\fo} \sigma_\fo \ket{s_\fo+r_\fo}|^2 \times \nonumber \\
& \quad \prod_{\ft\in \fT} |\bra{r_\ft} X_\ft^{\fw_\ft} \sigma_\ft \ket{r_\ft}|^2 
\end{align}
In short, this proves that the overall deviation $\mathcal F$ has the same effect as a convex combination of Pauli deviations $\sigma$ each occuring with probability $\sum_k |\alpha_{k, \sigma}|^2$.

\paragraph{Implicit upper bound.}
Because, \(\sum_{k, \sigma} |\alpha_{k\sigma}|^2 = 1\), the worst case scenario for the bound in Equation~\ref{eq:f} is when \(\alpha_{k\sigma} = 1\) for \(\sigma\) such that \(f(\sigma)\) is maximum.
Hence, the probability of failure is upper-bounded as follows:
\begin{align*}
\Pr[\fail] \leq \max_{\sigma} f(\sigma)
\end{align*}

Protocol~\ref{prot:MQ-VBQC} defines trap and output qubit configuration \(\nu_\fO,\nu_\fT\) by (i) the set \(\fT\) of trap qubits, itself entirely determined by the position and kind of test runs within the sequence of runs, and (ii) the preparation parameters \(\theta_l\) and \(r_l\) of each trap and output qubits.
Each parameter of (i) and (ii) being chosen independently, the probability of a given configuration \(\nu_\fO, \nu_\fT\) can be decomposed into the probability \(\Pr[\fT]\) for a given configuration of trap locations multiplied by the probability of a given configuration for the prepared state of the trap and output qubits, \(\prod_{l\in \fO \cup \fT}  \sum_{\theta_l,r_l}\Pr[\theta_l, r_l]\).
Using this, one can rewrite \(f(\sigma)\):
\begin{align}
f(\sigma) = & \sum_{\fT} \sum_{\fv\in \fV, \fw\in \fW} \Pr[\fT] \times \nonumber \\
& \; \prod_{\fo\in \fO} \sum_{\theta_\fo, r_\fo} \Pr[\theta_\fo, r_\fo] |\bra{F(x)_\fo + r_\fo} X_\fo^{\fv_\fo} \sigma_\fo \ket{s_\fo + r_\fo}|^2 \times \nonumber\\
& \; \prod_{\ft\in \fT} \sum_{\theta_\ft,r_\ft}\Pr[\theta_\ft, r_\ft] |\bra{r_\ft} X_\ft^{\fw_\ft} \sigma_\ft \ket{r_\ft}|^2 \label{eq:failure_prob}
\end{align}

For \(\sigma\) a Pauli deviation, denote by \(\sigma_{|X}\) the binary vector indexed by qubit positions of the combined computation where ones mark qubit positions for which \(\sigma\) acts as \(X\) or \(Y\).
Abusing notation, in the following, \(\fO\) will denote the  binary vector over qubit positions $i$ of the combined computation where ones are positioned for qubits in \(O\) -- that is the vector $(\1_{i\in\fO})_i$ for $i$ a qubit location.
Similarly, \(\fT\) will also denote \((\1_{i\in \fT})_i\).

Using the fact that \(|\bra{r_\ft} X_\ft^{\fw_\ft} \sigma_\ft \ket{r_\ft}|^2\) is 1 for \(X_\ft^{\fw_\ft} \sigma_\ft \in \{I,Z\}\) and 0 otherwise, the product over the trap qubits can be writen as:
\begin{align*}
& \prod_{\ft\in \fT}\sum_{\theta_\ft,r_\ft}\Pr[\theta_\ft, r_\ft] |\bra{r_\ft} X_\ft^{\fw_\ft} \sigma_\ft \ket{r_\ft}|^2 \\
& \quad =
\begin{cases}
1 & \mbox{for } \fT.\sigma_{|X} = \fw \\
0 & \mbox{otherwise} 
\end{cases}
\end{align*}
where, for \(a\) and \(b\) binary vectors, \(a.b\) is the bit-wise binary product vector.

For output qubits, before attempting the same computation, it is important to point out a important dependency of the deviation for the output qubits.
Failing to take it into account would yield an overly optimistic bound.
This dependency is due to the fact that, contrarily to trap qubits where the perfect protocol performs the identity, the output qubits are the result of more complex computation.
More precisely, the guarantee given by the protocol at this stage is only blindness.
Following the definition of the blind computing ideal resource given in the Formal Security Definitions Appendix above -- Equation~\ref{eq:blind} -- the Server is able to choose a deviation $\mathcal E$ and have it applied to the unprotected input of the protocol $x$, while himself not getting either $x$ nor $\mathcal E(x)$.
While $\mathcal E$ has been reduced here to a convex sum of Pauli deviations applied after the perfect protocol, nothing prevents these Pauli deviations to incorporate a dependency on the input $x$ or on the unencrypted output of the perfect protocol.
In short, this means that the Server could craft a deviation in such a way that only outputs equal to $F(x)$ are flipped, leaving those yielding $F(x)+1$ unaffected.

Going forward with the computation of factors for output qubits in Equation~\ref{eq:failure_prob}, it is thus necessary to distinguish output qubits that belong to computation rounds where no non-trivial deviation take place, and those that don't.
Define $\fu$ to be the random binary vector of length $|\fO|$ such that $s_\fo = F(x)+\fu_\fo$.
For an output qubit that is part of a computation round without a non-trivial deviation,
\begin{align*}
  & \sum_{\theta_\fo, r_\fo} \Pr[\theta_\fo, r_\fo] |\bra{F(x)_\fo + r_\fo} X_\fo^{\fv_\fo} \sigma_\fo \ket{s_\fo + r_\fo}|^2 \\
  & \quad = \sum_{\theta_\fo, r_\fo} \Pr[\theta_\fo, r_\fo, \fu_\fo] \times \\
  & \qquad |\bra{F(x)_\fo + r_\fo} X_\fo^{\fv_\fo} \sigma_\fo X_\fo^{\fu_\fo}\ket{F(x) + r_\fo}|^2 \\
  & \quad = \begin{cases}
    \Pr[\fu_\fo] & \mbox{for } \sigma_{|X, \fo} + \fu_\fo  = \fv_\fo \\
    0 & \mbox{otherwise} 
  \end{cases}
\end{align*}
When the output qubit is part of a computation round with a non-trivial deviation, the dependency argument given above yields:
\begin{align*}
  & \sum_{\theta_\fo, r_\fo} \Pr[\theta_\fo, r_\fo] \times \\
  & \quad |\bra{F(x)_\fo + r_\fo} X_\fo^{\fv_\fo} \sigma_\fo X_\fo^{\fu_\fo}\ket{F(x) + r_\fo}|^2 \leq \Pr[\fu_\fo]
\end{align*}

Hence, for a fixed $\sigma$, a necessary condition on $\fu$ and $\fT$ for having a non zero contribution to $f(\sigma)$ is thus:
\begin{align*}
  wt(\fT.\sigma_{|X}) \leq w \mbox{ and } wt(\fu.\neg\fS) \geq d/2 - |\fS|
\end{align*}
where $wt(.)$ is the Hamming weight of a binary vector, $\fS$ is a length $|\fO|$ binary vector where the ones are located on output qubits where at least one non-trivial deviation was performed in the corresponding computation round, and $\neg\fS$ is the bitwise negation of $\fS$.

Combining the corresponding bounds and summarizing the necessary condition with $(\fT, \fu) \in \Upsilon_\sigma$, one obtains:
\begin{align*}
  f(\sigma) \leq \sum_{(\fT, \fu) \in \Upsilon_\sigma} \Pr[\fT, \fu].
\end{align*}
Otherwise said, to record a failure of the protocol, the number of incorrect trap rounds need to be below the threshold $w$, while the number of non-trivially attacked computation rounds need to be greater than $d/2$ reduced by the amount of incorrect outcomes on non-attacked rounds due to the inherent randomness of the algorithm.

\paragraph{Explicit upper bound.}
Now, assume that the maximum of the bound above is attained for some $\sigma$ that happens to non-trivially affect one of the round, say \(k\), on more than one qubit.
Consider $\sigma'$ with the sole difference to $\sigma$ that $\sigma'$ restricted to one of these two qubits is equal to the identity. 
Then, $\sigma'$ still affects the round $k$ non-trivially, which implies that all configurations \((\fT,\fu)\) in \(\Upsilon_\sigma\) are also in \(\Upsilon_{\sigma'}\).
Therefore 
\begin{align*}
  \Pr[\fail] \leq \max_m \max_{\sigma \in E_{m}} \sum_{(\fT,\fu)\in\Upsilon_\sigma} \Pr[\fT, \fu].
\end{align*}
where $E_{m}$ denotes the set of Pauli operators with $m$ single qubit non-trivial deviations all in distinct rounds.

Because the bound above depends on $\fu$ only through $wt(\fu . \neg\fS)$ and because for any such subset the random variable $wt(\fu . \neg\fS)$ is less than $B(wt(\neg\fS), p)$ in the usual stochastic order, one obtains:
\begin{align*}
  \Pr[\fail] \leq \max_m \max_{\sigma \in E_{m}} \sum_{(\fT,\fu)\in\Upsilon_\sigma} \Pr[\fT]\times\Pr[\tilde \fu = \fu],
\end{align*}
where $\tilde \fu$ is a random binary vector where each coordinate follows a Bernouilli law with probability $p$, and where $B(n,p)$ is the binomial distribution for $n$ draws and probability $p$.
Using the fact that the random choice of test runs is completely uniform, the right hand side is invariant under permutations of the test and computation runs.
It is thus possible to restrict the range of the maximum to the specific Pauli operators $\sigma_{m}$ with a deviation on a single qubit in each of the first $m$ runs:
\begin{align}
  \Pr[\fail] \leq \max_m \sum_{T\in\Upsilon_{\sigma_{m}}} \Pr[\fT].\label{eq:upper_bound}
\end{align}

\begin{figure*}[ht]
  \centering
  \begin{minipage}{\textwidth}
    \includegraphics[width=0.9\textwidth]{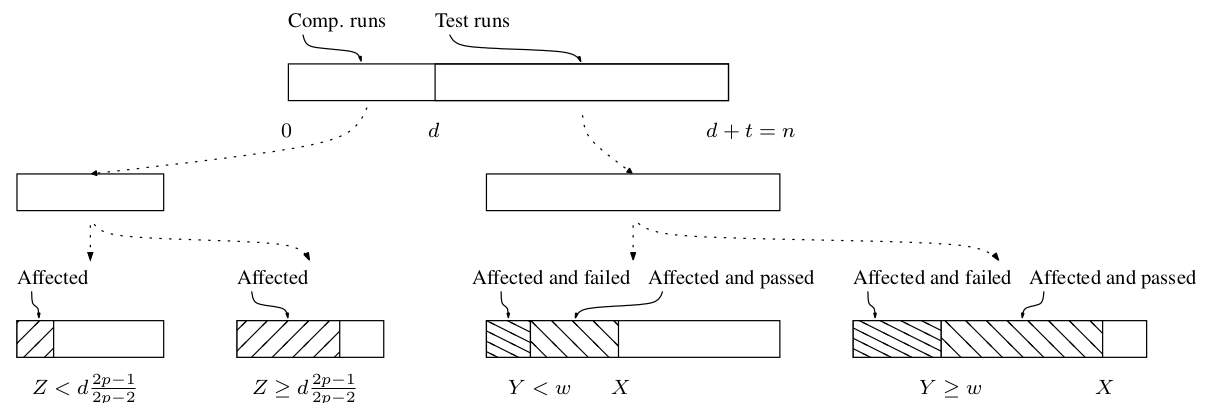}
    \caption{The four cases needed to determine a closed form upper bound for the probabiliity of failure. First, we determine the probability for the number of affected computation rounds. If it is low enough ($Z < d(2p-1)/(2p-2)$), no need to abort. If it is high ($Z \geq d(2p-1)/(2p-2)$), we find a bound on the probability that the number of failed test rounds $Y$ is below or above $w$.}
    \label{fig:vbqc}
  \end{minipage}
\end{figure*}

\paragraph{A closed from for the upper bound.}

To find a closed form upper bound for the soundness error, we now distinguish between two regimes for $m$, controlled by the parameter $\varphi > 0$:
\begin{enumerate}
	\item For $m \leq \left( \frac{2p-1}{2p-2} - \varphi \right) n$, we find a small upper bound on the probability that the client obtains a wrong result,
	\item whereas for $m \geq \left( \frac{2p-1}{2p-2} - \varphi \right) n$, we find a small upper bound on the probability that the client accepts the outcome of the protocol, i.e. that the verification passes.
\end{enumerate}
In the following, we define the constant ratios of test, computation and tolerated failed test runs as $\delta := d/n$, $\tau := t/n$ and $\omega := w/t$. Let $Z$ be a random variable counting the number affected computation runs (by the server's deviation or by inherent failure of the algorithm) and $Y$ a random variable counting the number of failed test runs, i.e.\ the number of affected test runs where the deviation hits a trap. We have that:
\begin{align*}
	\Pr \left[ \mathrm{fail} \right] & \leq \max_m \sum_{T\in\Upsilon_{\sigma_{m}}} \Pr [T] = \max_m \Pr \left[ Z \geq \frac{d}{2} \, \wedge \, Y \leq w  \right]\\
	& \leq \max \left\{ {\max_{m \leq \left( \frac{2p-1}{2p-2} - \varphi \right) n}} \Pr \left[ Z \geq \frac{d}{2} \right], \right. \\
	& \qquad\qquad\quad \left. {\max_{m \geq \left( \frac{2p-1}{2p-2} - \varphi \right) n}} \Pr \left[ Y \leq w \right] \right\}.
\end{align*}
Since $\Pr \left[ Z \geq d/2 \right]$ and $\Pr \left[ Y \leq w \right]$ are respectively increasing and decreasing with the number of attacked runs, both inner maximums are attained for $m = \left( \frac{2p-1}{2p-2} - \varphi \right)n$ and we therefore focus on this case.

Analogously to the verification proof of the original protocol, the second term can be bounded from above by first determining the minimum number of affected test runs before calculating the probability that the server's attack triggers a sufficient number of traps.

Hence, with $X$ denoting the number of test runs affected by the server's deviation, tail bounds for the hypergeometric distribution imply for all $\varepsilon_1 > 0$ that
\begin{align*}
	\Pr \left[ X \leq \left( \frac{m}{n} - \varepsilon_1 \right) t \right] \leq \exp \left( - \frac{2 \tau^2 \epsilon_1^2}{\frac{2p-1}{2p-2} - \varphi} n \right).
\end{align*}
Further, it follows by Hoeffding's bound for the binomial distribution that
\begin{align*}
	& \Pr \left[ \left. Y \leq \left( \frac{1}{k} - \varepsilon_2 \right) \left( \frac{m}{n} - \epsilon_1 \right) t \; \right| \; X = \left( \frac{m}{n} - \varepsilon_1 \right) t \right] \\
	& \quad \leq \exp \left( -2 \left( \frac{2p-1}{2p-2} - \varphi - \varepsilon_1 \right) \tau \varepsilon_2^2 n \right).
\end{align*}
All in all, we therefore obtain
\begin{align*}
	\Pr \left[ Y \leq w \right] \leq
	& \exp \left( - \frac{2 \tau^2 \epsilon_1^2}{\frac{2p-1}{2p-2} - \varphi} n \right) \\
	& + \exp \left( -2 \left( \frac{2p-1}{2p-2} - \varphi - \varepsilon_1 \right) \tau \varepsilon_2^2 n \right),
\end{align*}
where the threshold of tolerated failed test runs is set to $w = \left( 1/k - \epsilon_2 \right) \left( \frac{2p-1}{2p-2} - \varphi - \varepsilon_1 \right) t$.

Let's now focus on the first term and introduce the hypergeometrically distributed random variable $\bar{Z}$ counting the number of computation runs that are affected by the server's deviation. Then, for $\varepsilon_3 > 0$ tail bounds on the hypergeometric distribution imply
\begin{align*}
	\Pr \left[ \bar{Z} \geq \left( \frac{m}{n} + \varepsilon_3 \right) d \right] \leq \exp \left( - \frac{2 \delta^2 \varepsilon_3^2}{\frac{2p-1}{2p-2} - \varphi} n \right).
\end{align*}
Next, let $Z'$ be the random variable counting the number of computation runs that have not been affected by the server's deviation but which give a from $\bar{x}$ distinct result because of inherent failures of the algorithm. Note, that $Z'$ conditioned on $\bar{Z}$ fixed to a specific value is binomially distributed. It hence follows that
\begin{align*}
	& \Pr \left[ Z' \geq \left( p + \varepsilon_4 \right) \left( 1 - \frac{m}{n} - \varepsilon_3 \right) d \; \right| \; \left. \bar{Z} = \left( \frac{m}{n} + \varepsilon_3 \right) d \right] \\
	& \quad \leq \exp \left( -2 \left( 1 - \frac{2p-1}{2p-2} + \varphi - \varepsilon_3 \right) \delta \varepsilon_4^2 n \right).
\end{align*}
Note that it holds that $Z = \bar{Z} + Z'$. Therefore, it follows that
\begin{align*}
	& \Pr \left[ Z \geq \frac{d}{2} \right] \leq \Pr \left[ \left. Z \geq \frac{d}{2} \; \right| \; \bar{Z} \leq \left( \frac{m}{n} + \varepsilon_3 \right) d \right] \\
	& \qquad\qquad\qquad\quad + \Pr \left[ \bar{Z} \geq \left( \frac{m}{n} + \varepsilon_3 \right) d \right] \\
	& \quad \leq \Pr \left[ \left. Z' \geq \frac{d}{2} - \left( \frac{m}{n} + \varepsilon_3 \right) d \; \right| \; \bar{Z} = \left( \frac{m}{n} + \varepsilon_3 \right) d \right] \\
	& \qquad\quad + \Pr \left[ \bar{Z} \geq \left( \frac{m}{n} + \varepsilon_3 \right) d \right].
\end{align*}
Using the inequalities from above, we arrive at
\begin{align*}
	\Pr \left[ Z \geq \frac{d}{2} \right] \leq
	&\exp \left( -2 \left( 1 - \frac{2p-1}{2p-2} + \varphi - \varepsilon_3 \right) \delta \varepsilon_4^2 n \right) \\
	& + \exp \left( - \frac{2 \delta^2 \varepsilon_3^2}{\frac{2p-1}{2p-2} - \varphi} n \right)
\end{align*}
where we set
\begin{align*}
	\frac{d}{2} - \left( \frac{m}{n} + \varepsilon_3 \right) d = \left( p + \varepsilon_4 \right) \left( 1 - \frac{m}{n} - \varepsilon_3 \right) d.
\end{align*}
This condition can be rewritten as
\begin{align*}
	\frac{1}{2} - \frac{2p-1}{2p-2} + \varphi - \varepsilon_3 = \left( p + \varepsilon_4 \right) \left( 1 - \frac{2p-1}{2p-2} + \varphi - \varepsilon_3 \right),
\end{align*}
or equivalently
\begin{align*}
	\varepsilon_4 = & \left( 1 - \frac{2p-1}{2p-2} + \varphi - \varepsilon_3 \right)^{-1} \\
	& \qquad \cdot \left( \frac{1}{2} - \frac{2p-1}{2p-2} + \varphi - \varepsilon_3 \right) - p.
\end{align*}
It can be readily seen that this equation has solutions $\varepsilon_3, \varepsilon_4 > 0$ when $\varphi$ is fixed.

We finally conclude that
\begin{align}
	\nonumber \Pr \left[ \mathrm{fail} \right] \leq & \max \left\{ \exp \left( -2 \left( 1 - \frac{2p-1}{2p-2} + \varphi - \varepsilon_3 \right) \delta \varepsilon_4^2 n \right) \right. \\
	\nonumber & \qquad\quad + \exp \left( - \frac{2 \delta^2 \varepsilon_3^2}{\frac{2p-1}{2p-2} - \varphi} n \right), \\
	\label{eq:bound} & \qquad\quad \exp \left( - \frac{2 \tau^2 \epsilon_1^2}{\frac{2p-1}{2p-2} - \varphi} n \right) \\
	\nonumber & \qquad\quad + \left. \exp \left( -2 \left( \frac{2p-1}{2p-2} - \varphi - \varepsilon_1 \right) \tau \varepsilon_2^2 n \right) \right\}
\end{align}
for
\begin{align*}
	&w = \left( 1/k - \epsilon_2 \right) \left( \frac{2p-1}{2p-2} - \varphi - \varepsilon_1 \right) t, \\
	&0 < \varphi < \frac{2p-1}{2p-2}, \\
	&0 < \varepsilon_1 < \frac{1}{2} - \varphi, \\
	&0 < \varepsilon_2 < \frac{1}{k}, \\
	&0 < \varepsilon_3 < \varphi, \\
	& \varepsilon_4 = \left( 1 - \frac{2p-1}{2p-2} + \varphi - \varepsilon_3 \right)^{-1} \\
	& \qquad\qquad \cdot \left( \frac{1}{2} - \frac{2p-1}{2p-2} + \varphi - \varepsilon_3 \right) - p.
\end{align*}
To obtain an optimal bound, this expression must be minimized over $\varepsilon_1$, $\varepsilon_2$, $\varepsilon_3$ and $\varphi$.

Irrespective of the exact form of the optimal bound, choosing $\varphi$, $\varepsilon_1$, $\varepsilon_2$, and $\varepsilon_3$ sufficiently small implies the existence of protocols with verification exponential in $n$, for any fixed $0 < w/t < \frac{1}{k} \cdot \frac{2p-1}{2p-2}$ and fixed $\frac{d}{n}, \frac{t}{n} \in (0,1)$.

\paragraph{Optimality of the bound.}
To obtain the improved bound above, $Z_2$ was introduced as the count of non-affected computation runs yielding the correct result -- i.e.\ accept on yes instances, and reject on no instances.
Making sure that $Z_2$ would be greater than $d/2$ ensures that no matter what happens on computation runs that would yield an incorrect result, there is no possibility of being mistaken and reject in place of accept, and vice versa.
Yet, one might wonder if the situation is not more favorable: if the deviation by the server induces a flip of the accept / reject then could it be possible that some of the runs yielding incorrect result would be corrected by the deviation.
At first sight, this could be motivated by the fact that the computation being blind, the server could not possibly craft an attack that would selectively affect the runs yielding the correct results.
Unfortunately, this intuition is wrong: blindness does not rule out attacks that have different effects depending on the result of the computation itself.

To see this, consider the following situation.
Consider an algorithm solving a decision problem deterministically, so that in case of a yes instance, the algorithm outputs $\ket +$, and, in case of a no instance the output  is $\ket -$.
This deterministic algorithm yields a trivial randomized algorithm where a second qubit is generated in state $\alpha \ket 0 + \beta \ket 1$, with $|\alpha|^2 > 2/3$.
The new algorithm would take the output of the first one and apply a control-$Z$ gate between both qubits so that when the second qubit is traced out, the first one yields the correct answer with probability $|\alpha|^2$.
Yet, nothing could rule out an alternate implementation where after the control-$Z$ gate, the state of the first qubit undergoes two $H$ gates controlled by the second qubit being $\ket 0$.
Clearly this operation applies the identity to the first qubit as $H^2 = I$.
However, if the server applies a $X$ gate on the first qubit between these two control-$H$ gates, it will amount to a deviation consisting of a $Z$ gate applied only when the second qubit is $\ket 0$.
As a result, its attack only affects runs with the correct result.
Note that the attack affects correct outcomes only because in between the two control-$H$ gates, the computational branch for correct outcomes yields a state in the computational basis, while for incorrect ones it is the $\ket \pm$ basis.
This property is true independently of the quantum one-time-pad encryption of the states and can hence be applied on an encrypted computation. 

This example might seem excessively artificial, but such situations cannot be ruled out a priori, i.e.\ without an extensive understanding of the algorithm being implemented and of the proposed implementation.
In fact, a similar situation \cite{KKMO21} has already been encountered in the context of multi-party quantum computation where attacks could be crafted to evade detection when using less obvious inappropriate implementations.
\end{proof}
 
\section{Proof of Noise-Robustness}
\label{app:proof_rob}
Recall that the constant ratios of test, computation and tolerated failed test rounds are given by $\delta = d/n$, $\tau = t/n$ and $\omega = w/t$. We define the acceptance of the protocol to be the probability that the Client does not abort at the end of an execution. We then bound this probability in two regimes: (i) if the maximal noise $\pmax$ is smaller the (ratio) threshold of failed test runs, the protocol accepts with high probability; (ii) if the noise of the device is too large, i.e.\ $\pmin$ is already too large compared to the threshold, the protoco will most certainly abort.

\begin{lemma}[Acceptance on Noisy Devices]
\label{lem:acceptance}
Assume a Markovian round-dependent model for the noise on the Client and Server devices and let $\pmin \leq \pmax < 1/2$ be respectively a lower and an upper-bound on the probability that at least one of the trap measurement outcomes in a single test round is incorrect.

If $\omega > \pmax$, then the probability that the Client does not accept at the end of Protocol~\ref{prot:MQ-VBQC} is bounded by exponentially small $\epsilon_{\mathit{rej}}$ where
\begin{equation} \label{eq:accept}
	\epsilon_{\mathit{rej}} = \exp \left( -2 (\omega - \pmax)^2 \tau n \right).
\end{equation}
On the other hand, if $\omega < \pmin$, then the Client's acceptance in Protocol~\ref{prot:MQ-VBQC} is exponentially small and bounded by $\exp \left( -2 (\pmin - \omega)^2 \tau n \right)$.
\end{lemma}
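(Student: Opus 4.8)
The plan is to reduce the lemma to a one-sided tail bound for a sum of independent Bernoulli random variables. At Step~3 of Protocol~\ref{prot:MQ-VBQC} the Client aborts exactly when the number $c_{\mathit{fail}}$ of failed test rounds satisfies $c_{\mathit{fail}} \geq w$, and passes the test check otherwise; this is the notion of ``acceptance'' relevant here, since the hypothesis $\pmax < 1/2$ alone says nothing about the majority vote of Step~4. Hence it suffices to upper bound $\Pr[c_{\mathit{fail}} \geq w]$ when $\omega > \pmax$, and $\Pr[c_{\mathit{fail}} < w]$ when $\omega < \pmin$.

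First I would condition on the random partition $(C,T)$, with $\#T = t$, and write $c_{\mathit{fail}} = \sum_{i \in T} B_i$ where $B_i$ is the indicator that test round $i$ fails, i.e.\ that some trap $v$ of that round violates $b_v = r_v \oplus d_v$. The key point is that the $B_i$ are mutually independent: every round is run on freshly prepared qubits with freshly sampled secret parameters, the Server is honest, and -- the noise being round-dependent \emph{and} Markovian -- the imperfections on round $i$ reduce to a CPTP map acting only on round $i$'s registers with no memory carried forward; so $B_i$ is a function of the randomness and noise local to round $i$. By the noise hypothesis each $B_i$ is Bernoulli with a parameter $p_i$ obeying $\pmin \leq p_i \leq \pmax$, this being exactly ``the probability that at least one trap measurement outcome in that test round is incorrect'' after averaging over the secret colour $\mathsf{V}_i$. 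Since the bounds below do not depend on $(C,T)$, the conditioning can then be dropped.

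For $\omega > \pmax$, a standard coupling -- sample $U_i$ uniform on $[0,1]$ and let round $i$ fail iff $U_i < p_i$ -- gives $c_{\mathit{fail}} \leq X$ pointwise, where $X \sim \operatorname{Binomial}(t,\pmax)$ counts the rounds with $U_i < \pmax$; thus $\Pr[c_{\mathit{fail}} \geq w] \leq \Pr[X \geq w]$. Since $w = \omega t > \pmax t = \mathbb{E}[X]$, Hoeffding's inequality for the binomial (Lemma~\ref{lemma:hoeffding_binomial}, upper tail with $k = w$) gives $\Pr[X \geq w] \leq \exp(-2(\pmax t - w)^2/t) = \exp(-2(\omega - \pmax)^2 \tau n) = \epsilon_{\mathit{rej}}$, using $t = \tau n$. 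Symmetrically, for $\omega < \pmin$ the same coupling with threshold $\pmin$ shows $c_{\mathit{fail}} \geq Y$ pointwise with $Y \sim \operatorname{Binomial}(t,\pmin)$, so the acceptance probability satisfies $\Pr[c_{\mathit{fail}} < w] \leq \Pr[c_{\mathit{fail}} \leq w] \leq \Pr[Y \leq w]$; now $w = \omega t < \pmin t = \mathbb{E}[Y]$, so the lower-tail part of Lemma~\ref{lemma:hoeffding_binomial} with $k = w$ yields $\Pr[Y \leq w] \leq \exp(-2(\pmin t - w)^2/t) = \exp(-2(\pmin - \omega)^2 \tau n)$, as claimed.

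I expect the only genuinely delicate step to be the independence claim: one must argue carefully that ``Markovian round-dependent'' noise -- together with fresh per-round states and randomness and, on honest runs, the absence of any side-channel between rounds through the Server's internal memory -- really forces the failure indicators $\{B_i\}$ to be independent, and that the per-round failure probability is a well-defined scalar in $[\pmin,\pmax]$ once one correctly averages over the secret colour choice and over the random placement of test rounds within the sequence. Everything after that is a routine stochastic-dominance reduction to the binomial Hoeffding bound already recorded in Lemma~\ref{lemma:hoeffding_binomial}; integrality of $w = \omega t$ plays no role, as only the events $\{c_{\mathit{fail}} \leq w\}$ and $\{c_{\mathit{fail}} \geq w\}$ are ever used.
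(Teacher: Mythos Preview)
Your proposal is correct and follows essentially the same approach as the paper: both reduce to stochastic dominance of $c_{\mathit{fail}}$ by $\operatorname{Binomial}(t,\pmax)$ (resp.\ from below by $\operatorname{Binomial}(t,\pmin)$) and then invoke Lemma~\ref{lemma:hoeffding_binomial}. Your explicit coupling via uniform $U_i$ is just the standard construction underlying the stochastic-order comparison the paper states without proof, and your discussion of independence makes explicit what the paper compresses into ``since we suppose that the noise is not correlated across rounds.''
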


\begin{proof}
We define the random variables $Y$ that corresponds to the number of failed test rounds during one execution of the protocol. We call $\ok$ the event that the Client accepts at the end of the protocol -- if not too many test rounds fail, meaning that $Y < w$.

\paragraph{For $\omega > \pmax$.} Equivalently, we have that $w > t\pmax$. We are looking to lower-bound the probability that an honest round does not abort:
\begin{align*}
\Pr \left[ \ok \right] = \Pr \left[ Y < w \right].
\end{align*}
Note that $Y$ describes exactly the number of test rounds in which at least one trap measurement outcome is incorrect (by definition of a failed test round). The probability that a given test round fails is therefore upper-bounded by $\pmax$. Let $\hat{Y}_1$ be a random variable following a $(t, \pmax)$-binomial distribution. Since we suppose that the noise is not correlated across rounds, $Y$ is upper-bounded by $\hat{Y}_1$ in the usual stochastic order:
\begin{align*}
\Pr \left[ Y < w \right] \geq &\Pr \left[ \hat{Y}_1 < w \right] = 1 - \Pr \left[ \hat{Y}_1 \geq w \right]
\end{align*}
Further, since $\mathbb{E}\left[ \hat{Y}_1 \right] = t\pmax < w$, applying Lemma~\ref{lemma:hoeffding_binomial} yields:
\begin{align*}
\Pr \left[ \hat{Y}_1 \geq w \right] \leq &\exp \left( -2 \frac{(t \pmax - w)^2}{t} \right)\\
&= \exp \left( -2 (\omega - \pmax)^2 \tau n \right) = \epsilon_{\mathit{rej}}.
\end{align*}

\paragraph{For $\omega < \pmin$.} In that case, we have that $w < t\pmin$. We show that the probability of accepting is upper-bounded by a negligible function. 
Let $\hat{Y}_2$ be a random variable following a $(t, \pmin)$-binomial distribution, $Y$ then is lower-bounded by $\hat{Y}_2$ in the usual stochastic order:
\begin{align*}
\Pr \left[ Y < w \right] \leq \Pr \left[ \hat{Y}_2 < w \right]
\end{align*}
Since $w < t\pmin$, using Lemma~\ref{lemma:hoeffding_binomial} directly and with the same simplifications as above, we get:
\begin{align*}
\Pr \left[ \hat{Y}_2 < w \right] \leq \exp \left( -2 (\pmin - \omega)^2 \tau n \right),
\end{align*}
concluding the proof.
\end{proof}

\begin{theorem}[Local-Correctness of VDQC Protocol on Noisy Devices]
\label{thm:correctness_form}
Assume a Markovian round-dependent model for the noise on Client and Server devices and let $\pmax$ be an upper-bound on the probability that at least one of the trap measurement outcomes in a single test round is incorrect.

If $\pmax < \omega < \frac{1}{k}\cdot \frac{2p-1}{2p-2}$, then the protocol is $\epsilon_{\mathit{cor}}$-locally-correct with exponentially small $\epsilon_{\mathit{cor}} = \epsilon_{\mathit{rej}} + \epsilon_{\mathit{ver}}$, with $\epsilon_{\mathit{rej}}$ from Lemma~\ref{lem:acceptance} and $\epsilon_{\mathit{ver}}$ from Theorem~\ref{thm:verif}.
\end{theorem}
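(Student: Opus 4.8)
The plan is to derive local‑correctness of the honest‑but‑noisy protocol from the two stand‑alone bounds already in hand: the acceptance bound of Lemma~\ref{lem:acceptance} and the verifiability bound of Theorem~\ref{thm:verif}. Fix an input $x$ and run Protocol~\ref{prot:MQ-VBQC} with honest Client and Server whose devices are noisy. The output $\Tr_B\circ\mathcal{P}_{AB}(\ket x)$ can differ from the ideal output $\mathcal{U}(\ket x)$ only if one of two things happens: either the Client sends $\Abort$ even though both parties are honest (a spurious abort, caused by noise spoiling too many test rounds), or the Client does not abort but the decoded majority vote over the computation rounds returns a value other than $F(x)$. Conditioned on neither event, the honest protocol reproduces exactly the ideal output, so a union bound gives that $\epsilon_{\mathit{cor}}$ is at most the probability of a spurious abort plus the probability of accepting an incorrect output; it then remains to bound these two quantities by $\epsilon_{\mathit{rej}}$ and $\epsilon_{\mathit{ver}}$ respectively.

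For the spurious‑abort term I would invoke Lemma~\ref{lem:acceptance} directly. Because the noise is Markovian and round‑dependent, the events ``test round $j$ fails'' are independent across $j$, and by assumption each has probability at most $\pmax$; since $\omega = w/t > \pmax$, the lemma yields $\Pr[Y \geq w] \leq \epsilon_{\mathit{rej}} = \exp\bigl(-2(\omega-\pmax)^2\tau n\bigr)$, which is exponentially small in $n$. This is the only place in the argument where the noise assumptions (the $\pmax$ bound and Markovianity) are used at all.

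For the incorrect‑output term the key observation is that an honest‑but‑noisy execution is just a particular instance of the adversarial scenario analysed in Theorem~\ref{thm:verif}. Composing the per‑round noise channels with the honest Server operations produces a valid choice of Server maps $\{\mathcal{F}_j\}_j$, and noise on the Client's state preparations is absorbed into the same deviation since the prepared qubits pass straight to the Server (the Client's classical processing being taken to be noiseless). Theorem~\ref{thm:verif} bounds, uniformly over all such $\{\mathcal{F}_j\}_j$, the probability that the Client accepts while the decoded output disagrees with $F(x)$; moreover that bound already incorporates the inherent error $p$ of the BQP computation, because the count of spoiled computation rounds in its proof includes rounds that fail inherently rather than through a deviation. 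Since $\omega < \frac{1}{k}\cdot\frac{2p-1}{2p-2}$ is exactly the regime in which $\epsilon_{\mathit{ver}}$ is exponentially small, the probability of accepting a wrong output is at most $\epsilon_{\mathit{ver}}$.

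Combining the pieces gives $\epsilon_{\mathit{cor}} = \epsilon_{\mathit{rej}} + \epsilon_{\mathit{ver}}$, exponentially small in $n$, and the admissible interval $\pmax < \omega < \frac{1}{k}\cdot\frac{2p-1}{2p-2}$ in the statement is precisely the intersection of the ranges required by Lemma~\ref{lem:acceptance} and Theorem~\ref{thm:verif}. I expect the one genuinely delicate step to be the reduction in the third paragraph: one must argue carefully that honest‑device imperfections --- in particular those on the Client's side --- really do fit inside the Server‑deviation model underlying the verifiability proof, so that the verification bound transfers verbatim rather than having to be re‑derived. Everything else is a union bound and a direct appeal to the two previously established results.
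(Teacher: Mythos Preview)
Your proposal is correct and follows essentially the same approach as the paper: decompose the failure of local-correctness into the two disjoint events ``abort'' and ``accept a wrong output'', bound the first by $\epsilon_{\mathit{rej}}$ via Lemma~\ref{lem:acceptance} and the second by $\epsilon_{\mathit{ver}}$ via Theorem~\ref{thm:verif}, then add. The paper writes this as $\Pr[\Correct \wedge \ok] = \Pr[\ok] - \Pr[\neg\Correct \wedge \ok]$ rather than as a union bound on the complement, but the two are algebraically identical. If anything, you are more explicit than the paper about why the verifiability bound applies to honest noise (by viewing the noisy honest execution as a particular choice of $\{\mathcal F_j\}_j$); the paper simply invokes Theorem~\ref{thm:verif} without spelling out this reduction.
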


\begin{proof}
We call $\ok$ the event that the Client accepts at the end of the protocol -- if not too many test rounds fail -- and $\Correct$ the event corresponding to a correct output -- if only few of the computation rounds have their output bits flipped.

We are looking to lower-bound the probability of an honest round producing the correct outcome and not aborting:
\begin{align*}
	\Pr \left[ \Correct \land \ok \right] = \Pr \left[ \ok \right] - \Pr \left[ \neg\Correct \land \ok \right].
\end{align*}
As $\pmax < \frac{1}{k}\cdot \frac{2p-1}{2p-2} < 1/2$, from Lemma~\ref{lem:acceptance} we have
\begin{align*}
	\Pr \left[ \ok \right] \geq 1 - \epsilon_{\mathit{rej}}.
\end{align*}
Since $\omega < \frac{1}{k}\cdot \frac{2p-1}{2p-2}$, the parameters of Protocol~\ref{prot:MQ-VBQC} comply with Theorem~\ref{thm:verif}, from which we get that
\begin{align*}
	\Pr \left[ \neg\Correct \land \ok \right] \leq \epsilon_{\mathit{ver}}.
\end{align*}
It follows that
\begin{align*}
	\Pr \left[ \Correct \land \ok \right] \geq 1 - \epsilon_{\mathit{rej}} - \epsilon_{\mathit{ver}},
\end{align*}
which concludes the proof.
\end{proof}

\end{document}